\newtheorem{theorem}{Theorem}[section]
\newtheorem{proposition}[theorem]{Proposition}
\newtheorem{corollary}[theorem]{Corollary}
\newtheorem{fact}[theorem]{Fact}
\newtheorem{definition}[theorem]{Definition}
\newtheorem{example}[theorem]{Example}
\newtheorem{question}[theorem]{Question}
\newenvironment{proof}{\mbox{\bf Proof.}}{\mbox{$\dashv$}\bigskip}
\begin{document}
\begin{center}
{\Large\bf  Computation
Environments~\footnotesize{{(1)}}}\\~\\
{\textbf{An Interactive Semantics for Turing Machines}\\
(which $\mathrm{P}$ is not equal to $\mathrm{NP}$ considering
it)~\footnote{It is the first draft of the article. The English
language is not yet perfect. This article is the first paper of
some serial related articles which would be appeared later.}}\\~
\\~
{\bf  Rasoul Ramezanian}\\
Department of Mathematical Sciences, \\
Sharif University of Technology,\\
P. O. Box 11365-9415, Tehran, Iran\\
 ramezanian@sharif.edu
\end{center}
\begin{abstract}

 \noindent To scrutinize     notions of computation and
time complexity, we introduce and formally define an interactive
model for computation that we call it the \emph{ computation
environment}. A computation environment consists of two main
parts: i) a universal processor and ii) a computist  who uses the
computability power of the universal processor to  perform
effective procedures. The notion of computation finds it meaning,
for the computist, through his \underline{interaction}   with the
universal processor.

\noindent We are interested in those computation environments
which can be considered as alternative for the real computation
environment that the human being is its computist. These
computation environments must have two properties: 1- being
physically plausible, and 2- being enough powerful.

\noindent Based on Copeland' criteria for effective procedures, we
define what a \emph{physically plausible} computation environment
is.

\noindent By being enough powerful, we mean that the universal
processor of the computation environment, must be in a way that
the computist can carry out (via the universal processor) all
logical deductions that the human being can carry out.

\noindent We construct two \emph{physically plausible} and
\emph{enough powerful} computation environments: 1- the Turing
computation environment, denoted by $E_T$, and 2-  a persistently
evolutionary computation environment, denoted  by $E_e$, which
persistently evolve in the course of executing the computations.

\noindent We prove that the equality of complexity classes
$\mathrm{P}$ and $\mathrm{NP}$ in the computation environment
$E_e$ conflicts with the \underline{free will} of the computist.

\noindent We provide an axiomatic system $\mathcal{T}$ for Turing
computability and prove that ignoring just one of the axiom of
$\mathcal{T}$, it would not be possible to derive $\mathrm{P=NP}$
from the rest of axioms.

\noindent We  prove that the computist who lives inside the
environment $E_T$, can never be confident that whether he lives in
a static environment or a persistently evolutionary one.

\end{abstract}

\section{Introduction}
\noindent We distinguish between the syntax and semantics of
Turing machines (following D. Goldin, and P. Wegner, see~5.2,
of~\cite{kn:weg07}), and  scrutinize what the computation is from
an interactive view. We  study the computation as a result of
interaction between a computist and a computer. To do this, We
propose a notion called the `computation environment', as an
interactive model for computation which regards the computation as
an interaction between the human being as a computist  and a
processor unit called the universal processor. In this way, Turing
machines (which are proposed by Turing as a model of computation)
would find their meanings based on the interaction of the human
being and the universal processor.

 A computation environment would consist of two parts: 1- a universal processor unit and
 2- a computist who uses the universal processor to perform
computation.

Based on Copeland's Criteria for effective procedures, we define
  \emph{physical plausibility} as a property  for computation
environments, and attempt to construct computation environments
which could be taken as   alternatives for the computation
environment of the real world.  That is, we are interested in
those computation environments that are physically plausible and
if the human being lives in one of them then the computability
power of the environment enables  him to carry out logical
deductions.



The famous problem $\mathrm{P}$ versus $\mathrm{NP}$ is a major
 difficult problem in computational complexity theory, and it is shown that
lots of certain techniques (such as diagonalization arguments and
relativization which are used to show that one complexity class
differs from another one) are failed to answer to this problem
(see~\cite{kn:pnp2,kn:pnp1, kn:pnp3}). However, for some
hypercomputations (see appendix~\ref{HYP}), it is proved that
$\mathrm{P\neq NP}$.

Hypercomputation extends the  capabilities of the Turing
computation via using new resources such as \emph{1)} \emph{oracle
information sources}
  \emph{2)} \emph{infinite specification}, \emph{3)} \emph{infinite computation} and
\emph{4)} the \emph{interaction} (see the appendix~\ref{intu}).
Among these four resources, the resources \emph{1,2}, and \emph{3}
do not seem physically plausible as they have infinite structures.
But the forth one, the interaction, seems physically plausible.

 It is proved for hypercomputation models which use  the resources as \emph{oracle
information}, and    \emph{infinite computation}, $\mathrm{P}$ is
not equal to $\mathrm{NP}$.

\begin{itemize}\item[] Oracle information source: an oracle machine is a Truing machine equipped with an oracle that
is capable of answering questions about the membership of a
specific set. It can be   shown that $\mathrm{P\neq NP}$ for
oracle Turing machines considering a specific oracle set (see
page~74, of~\cite{kn:arora}).

\item[] Infinite computation: it is shown that $\mathrm{P\neq NP}$
for infinite time Turing machine~\cite{kn:ITM}. Infinite time
Turing machines are a natural extension of Turing machines to
transfinite ordinal times, and the machine would be able to
operate for transfinite number of
steps~\cite{kn:ittm}.\end{itemize} In this paper, we aim to study
the famous problem $\mathrm{P}$ vs $\mathrm{NP}$ considering the
   \emph{interaction} resource.
 We introduce a new kind of
Hypercomputation called persistently evolutionary Turing machine.
A persistently evolutionary Turing machine is a Turing machine
that its inner structure  may persistently evolve during the
computation. Persistently evolutionary Turing machines can be
considered as a model of interactive computation~\cite{kn:intcom}
and should be compared with Persistent Turing
machines~independently introduced by Goldin and
Wegner~\cite{kn:weg} and Kosub~\cite{kn:kos}. We are not going to
prove that $\mathrm{P\neq NP}$ for persistently evolutionary
Turing machines similar to what is proved about infinite time
Turing machines in~\cite{kn:ITM}. Instead, we take advantage of
persistently evolutionary Turing machines and construct a
computation environment $E_e$ which its universal processor is a
persistently evolutionary Turing machine, and in this environment
$\mathrm{P}$ is not equal to $\mathrm{NP}$. In this way, we
present an interactive semantics for Turing machines and   we
prove $\mathrm{P}$ is not equal to $\mathrm{NP}$ taking this
semantics. Our work should be considered in the literature of
  interactive computation~\cite{kn:intcom} which is a new paradigm of computation.

We   construct a computation environment that we call it Turing
computation environment $E_T$.  We prove that the computationally
complexity  classes $\mathrm{P}$ and $\mathrm{NP}$ of this
environment are exactly the same complexity classes $\mathrm{P}$
and $\mathrm{NP}$ for Turing computation.  We prove that the
computist who lives inside the environment $E_T$ can never be
confident that whether he lives inside a static environment or a
persistently evolutionary one.


The paper is organized as follows:

\begin{itemize}
\item[] In Section~\ref{ETM}, we briefly review the definition of
Turing computation and time complexity, and ask some questions
which their answers  inspire  us to distinguish between syntax and
semantics of Turing machine.

 \item[] In Section~\ref{CES}, we define formally what a
computation environment is, and we introduce the Turing
computation environment that the notions of computability and time
complexity in this environment are exactly as the same as they are
for Turing machines.

\item[] In Section~\ref{peTM}, a kind of hypercomputations called
persistently evolutionary Turing machines are introduced.
Persistently evolutionary Turing machines are interactive
machines, and it is shown that they are as physically plausible as
Turing machines are.

\item[] In Section~\ref{PECE}, a persistently evolutionary
computation environment is introduced and  proved that the
equality of two  complexity classes $\mathrm{P}$ and
$\mathrm{NP}$, in this environment, conflicts with the
\underline{free will} of the computist.

\item[] In Section~\ref{AXIOM}, we provide an axiomatic system
$\mathcal{T}$ for Turing computation, and it is shown that
ignoring one of the axioms of   $\mathcal{T}$, it is not possible
to derive $\mathrm{P=NP}$ from the rest of axioms.

\end{itemize}

\section{Semantics vs. Syntax}\label{ETM}

In this section, we briefly review the standard definition  of
Turing machines and time complexity in the literature. Then we ask
and answer a question about who means and executes a Turing
machine.
\begin{definition}\label{turingd}
A  Turing machine $T$ is a tuple $T=(Q,\Sigma,\Gamma,q_0,
\delta)$, where $Q$ is a  finite set of states containing two
special states $q_0$ (initial state) and $h$ (halting state).
$\Sigma$ and $\Gamma$ are two finite sets, input and work tape
alphabets, respectively,  with $\Sigma\subseteq \Gamma$ and
$\Gamma$ has a symbol $\triangle \in \Gamma-\Sigma$, the blank
symbol. $\delta: (Q-\{h\})\times\Gamma\rightarrow
Q\times\Gamma\times\{ R,L\}$ is a partial function called the
transition function.

\begin{itemize}
\item[-] A configuration of a Turing machine $T$ is  any symbolic
form $(q,x\underline{a}y)$  where $q\in Q$, $x,y\in \Gamma^*$, and
$a\in \Gamma$.

\item[-] A configuration $C$ is a \emph{successful configuration}
whenever $C=(h,x\underline{\triangle})$ or $C=(h,
\underline{\triangle}x)$.

\item[-] A  computation path of a Turing machine $T$ on a string
$x$ is any finite sequence  $C_0C_1...C_n$ of configuration which
is started  with $C_0=(q_0, \underline{\triangle} x)$ and each
$C_{i+1}$ is obtained from $C_{i}$ by applying the transition
function $\delta$.

\item[-] We say a Turing machine $T$ accepts a string $x$, if
there is a computation path $C_0C_1...C_n$, for some $n\in
\mathbb{N}$, where  $C_0=(q_0, \underline{\triangle} x)$, and
$C_n$ is a successful configuration. We define
$L(T)\subseteq\Sigma^*$ to be the set of all strings accepted by
$T$.

\item[-] The time complexity of computing $T$ on $x$, denoted by
$time_T(x)$, is the number of configurations which appears  in the
computation path $C_0C_1...C_n$, for some $n\in \mathbb{N}$, where
$C_0=(q_0, \underline{\triangle} x)$, and $C_n$ is a successful
configuration.

\end{itemize}
\end{definition}

\begin{definition}
Let $f:\mathbb{N}\rightarrow \mathbb{N}$   and $L\subseteq
\Sigma^*$. We say the time complexity of the computation of the
language $L$ is less than $f$ whenever there exists a Turing
machine $T$ such that $L(T)=L$, and for all $x\in L$,
$time_T(x)<f(|x|)$.
\end{definition}

\begin{definition}\label{class} The computational complexity class $\mathrm{P}\subseteq 2^{\Sigma^*}$ is defined
 to be the
set of all languages that their time complexity is less than a
polynomial function. We also define the complexity class
$\mathrm{NP}\subseteq 2^{\Sigma^*}$  as follows:
\begin{itemize}
\item[] $L\in \mathrm{NP}$  iff there exists $J\in \mathrm{P}$ and
a polynomial function $q$ such that for all $x\in \Sigma^*$,
\begin{center}$x\in L\Leftrightarrow\exists y\in \Sigma^* (|y|\leq q(|x|) \wedge
(x,y)\in J)$.
\end{center}

\end{itemize}

\end{definition}

We   start by   asking and answering some questions, and inspired
by the questions, we propose the notion of computation
environment.

\begin{question}
Who \emph{executes} and \emph{means} the computation?
\end{question}

If we ask you who executes an algorithm written in a computer
programming language, say C++, you may simply answer that the CPU
(central processor unit) of the computer. Now we ask who executes
the computation in the real world?

 Consider the Turing machine

\noindent  $\ulcorner$~$T=(Q,\Sigma,\Gamma,q_0, \delta)$, where

\noindent $Q=\{q_0,q_1,h\}$, $\Sigma=\{0,1\}$,
$\Gamma=\{0,1,\triangle\}$,

\noindent $\delta=\{(q_0,\triangle)\rightarrowtail (q_1,\triangle,
R), (q_1,0)\rightarrowtail (q_1,1,R), (q_1,1)\rightarrowtail
(q_2,1,L), (q_2,0)\rightarrowtail(h,0,l)\}$~$\urcorner$.

\begin{itemize}
\item Who means  the string of alphabets written between
$\ulcorner~\urcorner$ in above?
\end{itemize}
\noindent Consider  a configuration $C=(q_1,
\triangle001\underline{1}01\triangle)$,\begin{itemize}\item who
means and executes the transition function $\delta$ on the
configuration $C$, and provides the configuration
$C'=(q_2,\triangle00\underline{1}101\triangle)$?  \end{itemize}
Turing describes an effective procedure as one that could be
performed by an infinitely patient \emph{ideal mathematician}
working with an unlimited supply of paper and
pencils~\cite{kn:ord}. Therefore, Turing knows a Turing machine as
one that is carried out by the human being

We may   answer   above questions in two different ways
  based on  what kind of entity one may assumes a computing machine
is:\begin{itemize}\item[1-] it is a mental and subjective entity
that the human being means   it, and the \textbf{brain} of the
human being as a universal processor executes it, or \item[2-] it
is a mechanical  entity that the \textbf{nature} as a universal
processor executes it and the human being is an \emph{observer} of
this execution.\end{itemize}

 In
the first view, assuming   a Turing machine $T$ and an arbitrary
string $x\in\Sigma^*$,
 it is the brain of the human being (as a universal processor) who  applies the machine $T$
on input $x$. If a Turing machine $T$ and a string $x$ are given
to the human being, then he  applies  $T$ on $x$ having an
unlimited supply of paper and pencils. He counts the number of
times that he moves his pencil to left or  right  as the time
complexity that he consumes on this computation.
 The human being does  not have access to  the brain, and ignores
 the amount of mental processes that happens inside it. Time complexity
 is defined to be the number of pencil moves not the amount of mental
 processes.

In the second view,  given   a Turing machine $M$ and an arbitrary
string $x\in\Sigma^*$,
 it is the nature (as a universal processor) who  executes the machine
 $T$
on input $x$. The human being just is a computist who knows what a
Turning machine is and sees that the nature behaves well defined,
that is, the computist cannot observe any differences between two
execution of $T$ on $x_0$ that the nature may complete in
different periods of time. Note that the nature, in order to
perform one transition of a Turing machine $T$ on a string $x$,
may need to do lots of mechanical processes that could be
invisible for the computist, and the computist ignores them and he
\emph{does not count} them as the time complexity.

 The reason that the human knows the Church-Turing
statement to be a \underline{thesis}, is that the resource which
executes effective procedures is a black for him. The human being
takes   the resource of computation (the brain or the nature)   as
a black box, and does not consider the inner structure and the
internal action of the resource, in formalizing the
  notions of computability  and time complexity.

By above arguments, we do not consider Turing machines as
\emph{\underline{autonomous}} and \emph{\underline{independent}}
entities. They are subject to a universal processor that could be
the human's brain or the nature. Nobody can build  a Turing
machine in the outer world, since a Turing machine has an infinite
tape! However, if we consider   Turing machines as   instruction
sets for a universal processor, then they find their meanings. In
this way, we differ between \emph{semantics} and \emph{syntax} of
Turing machines, following D. Goldin and P. Wegner
(5.2~of~\cite{kn:weg07}):\begin{quote}

\emph{Statements about TM \emph{(}Turing Machines\emph{)}
expressiveness, such as the Church-Turing Thesis, fundamentally
depend on their semantics. If these semantics were defined
differently, it may \emph{(}or may not\emph{)} produce an
equivalent machine. Kugel uses the terms machinery and machines to
describe the same distinction between "what a machine contains"
and "how it is used" (Kugel, 2002). He points out that if used
differently, the same machinery results in a different machine. An
example of a model that shares TM's syntax (machinery) but has
different semantics are Persistent Turing Machines; its semantics
are based on dynamic streams and
persistence.}~(page~19~of~\cite{kn:weg07})\end{quote}

There are lots of formalization   of the notion of computation,
like Church $\lambda$ calculus, Turing machines, and Markov
algorithms and etc (see, Chapter~8 of~\cite{kn:CFLA}). What all
formalization for computation have in common is that they are
purely syntactical. Mostowski says (see page~70 of
~\cite{kn:CFLA}):
\begin{quote}
\emph{However much we would like to ``mathematize" the definition
of computability, we can never get completely rid of the semantics
aspects of this concept. The process of computation is a
linguistic notion (presupposing that our notion of language is
sufficiently general); what we have to do is to delimit a class of
those functions (considered as abstract mathematical objects) for
which there exists a corresponding linguistic object (a process of
computation). }[Mostowski, p.35]
\end{quote}

Therefore, it seems reasonable that we distinguish between
semantics and syntax of a Turing machine, and propose different
semantics for it. In this way, for a Turing machine, we face with
two things:

\begin{itemize}\item[1)] the \underline{syntax}: what does a Turing machine consist of?, and
 \item[2)] the \underline{semantics}: how is a Turing machine executed?
\end{itemize} The syntax of a Turing machine is a finite set of
basic transitions coded in a finite string, and its semantics
prescribes that how a the computation starts from a configuration,
transits to other configurations and finally stops.

In this paper, we propose computation environments to present an
interactive semantics for Turing machines. We prove that regarding
this semantics $\mathrm{P}$ is not equal to $\mathrm{NP}$.

 The computation
environment would have two parts:
\begin{itemize} \item[1-] a universal processor which means and executes the syntax of computation
through interaction with a computist, and \item[2-]
 a computist  who ignores  the inner structure and the internal actions of
the universal processor in formalizing   the notion of
computability and time complexity.\end{itemize}

\begin{definition} An  instruction-set   of a processor is a finite
set of its instructions. An instruction (a command) is a single
operation of a processor that the processor interacts via it with
its outer.

\end{definition}
Note that instructions   differ  from the internal actions that a
processor may execute to perform one of its instruction.
Instructions are assumed to be atomic statements that make  it
possible for a user to communicate (interact) with the processor.

 The universal processor
(nature, or brain of the computist) executes the computation of an
instruction-set $M$ on an input string $x$.

Assuming the human being as a computist of the computation
environment of the real world, we may   say
\begin{itemize}\item[i-]Turing machines are finite instruction-sets   of the universal
processor of the real world,  and \item[ii-] if the human being
can effectively compute  a function $f$ then there exists a code
of a Turing machine as an instruction-set of the universal
processor such that the human being can communicate through the
code with the universal processor to compute the function.
\end{itemize}

 \section{Computation Environments}\label{CES}

 As it is mentioned in introduction, we   divide a  computation environment to two main parts:
 1) a  universal processor, and 2)  a computist.
  In this section, we provide a formal definitions for
 computation environments.

  In the following of the paper, we may use two terms
 `the observer' of an environment and `the god' of an environment.
 \begin{itemize}
\item An \underline{observer} of an environment is an ideal
mathematician  who does not have access to the inner structure  of
the universal processor.

\item A \underline{god} of an environment is an ideal
mathematician who has access   to the inner structure   of the
universal processor.

\end{itemize} We start some propositions and theorems   by symbol
\textbf{GV} to declare  that they are proved by the ideal
mathematician who has access to the inner structure of the
universal processor.

\begin{definition}
A \emph{universal processor} is a tuple    $U=(TBOX,SBOX, INST,
CONF)$,
 where
  \begin{itemize}
\item[1.] $INST$ is a nonempty set (assumed to be the set of all
instructions of the universal processor), and $INST_0\subseteq
INST$ is a nonempty subset called the set of  starting
instructions.

\item[2.] $CONF$ is a nonempty set called the set of
configurations such that to each $x\in \Sigma^*$,
 a unique configuration $C_{0,x}\in CONF$ is associated as the start
 configuration, and to each $C\in CONF$,  a unique string $y_C\in
 \Sigma^*$ is associated.

\end{itemize}
Two sets $INST$ and $CONF$ consist the \emph{programming language}
of the universal processor that via them a computist communicates
with the processor.
\begin{itemize}
\item[3.]$TBOX$ (the transition black box) is a total function
from $CONF\times INST$ to $CONF\cup\{\bot\}$,

\item[4.] $SBOX$ (the successful black box) is a total function
from $CONF$ to $\{YES, NO\}$,
\end{itemize} The two boxes $TBOX$ and $SBOX$ are processor units
of the universal processor.
\end{definition}

\begin{definition}~\begin{itemize}

\item[i.] A   \emph{syntax-procedure} is  a finite  set
$M\subseteq INST$ (a finite set of instructions), satisfying the
following condition (called the determination condition): for
every $C\in CONF$ either for all $\iota \in M$,
$TBOX(C,\iota)=\bot$, or at most there exists one instruction
$\tau \in M$ such that $TBOX(C,\iota)\in CONF$. We refer to the
set of all syntax-procedures by the symbol $\Xi$.

\item[ii.]We let $\upsilon: \Xi\times CONF\rightarrow
INST\cup\{\bot\}$ be a total function such that for each
syntax-procedure $M$ and $C\in CONF$, if $\upsilon(M,C)\in INST$
then $\upsilon(M,C)\in M$, and $TBOX(C,\upsilon(M,C))\in CONF$.
\end{itemize}
\end{definition}

\begin{definition} \label{CE} A \emph{computation environment} is a pair $E=(U,O)$ where $U$
is a universal processor  and $O$ is a computist   which satisfy
the following conditions:

\begin{itemize}\item[$c1.$] The computist $O$ just observes the input-output behavior of
two black boxes of the universal processor, and ignores internal
actions of the universal processor (similar to what the human
being does in formalizing notions of  computation and time
complexity)

\item[$c2.$] The computist  $O$ has a \textbf{free will} to do the
following things in  any order that he wants:

\begin{itemize} \item[$1-$] he can freely choose an arbitrary instruction $\iota \in INST$
and an arbitrary configuration $C\in CONF$ to apply the $TBOX$ on
$(C,\iota)$, and

\item[$2-$] he  can freely choose an arbitrary  configuration
$C\in CONF$ to apply the $SBOX$ on.~\footnote{ By the term
\emph{free will}, we mean there is no syntax-procedure $M_1$ that
can simulate the behavior of the computist $O$, that is, there is
no syntax-procedure $M_1$ that can recognize the ordering that the
computist $O$ inputs strings to the universal processor $U$. Or in
other words, the order that the computist  chooses strings to
input to the universal processor need not to be predetermined.}
\end{itemize}

\item[$c3.$] The computist $O$ \emph{effectively defines}
(effectively intends) a string $x\in \Sigma^*$ to be in the
\emph{language} of a syntax-procedure $M$, whenever he can
construct a sequence $C_{0}C_{1},...,C_{n}$ of configurations in
$CONF$ such that
\begin{itemize}\item $C_0=C_{0,x}$, \item each $C_i$, $i\geq 1$,
is obtained by applying $TBOX$ on $(C_{i-1},\upsilon(M,C_{i-1}))$,
\item the $SBOX$ outputs $YES$ for $C_n$, \item and either
$\upsilon(M,C_{n})=\bot$ or $TBOX(C_n, \upsilon(M,C_{n}))=\bot$.
\end{itemize} The computist $O$ calls $C_{0}C_{1},...,C_{n}$ the
successful  computation path of $M$ on $x$. The length of a
computation path is the number of configurations appeared in. We
refer to the language of a  syntax-procedure $M$, by $L(M)$. Note
that the language $L(M)$ is dependent to the computist and is  the
set of all strings, say $x$, that the computist $O$, in his living
in the computation environment, can construct a successful
computation path of $M$ for them.

\item[$c4.$]The computist $O$ \emph{effectively defines}
(effectively intends) a partial function
$f:\Sigma^*\rightarrow\Sigma^*$ through a syntax-procedure $M\in
\Xi$, whenever for $x\in \Sigma^*$,   he can construct a sequence
$C_{0}C_{1},...,C_{n}$ of configurations in $CONF$ such that
\begin{itemize}\item $C_0=C_{0,x}$, \item each $C_i$, $i\geq 1$,
is obtained by applying $TBOX$ on $(C_{i-1},\upsilon(M,C_{i-1}))$,
\item the $SBOX$ outputs $YES$ for $C_n$, \item and either
$\upsilon(M,C_{n})=\bot$ or $TBOX(C_n, \upsilon(M,C_{n}))=\bot$,

\item $y_{C_n}=f(x)$.
\end{itemize}

\item[$c5.$] The semantics of a syntax-procedure  $M\in \Xi$ (the
way it is executed in the environment) is determined through the
interaction of the computist with the universal processor.

\item[$c6.$] The computist may start to apply the universal
processor $U$ on a  syntax-procedure  $M$ and a string $x$,
however he does not have to keep on the computation until the
successful box outputs $Yes$. The computist can leave the
computation $M$ on $x$ at any stage of his activity  and choose
freely any other syntax-procedure $M'$ and any other string $x'$
to apply $U$ on them.

\item[$c7.$] The computist $O$ defines the \textbf{time
complexity} of computing a syntax-procedure $M$ on an input string
$x$, denoted by $time_M(x)$,   to be $n$, for some $n\in
\mathbb{N}$, whenever he constructs  a successful computation path
of the syntax-procedure $M$ on $x$ with length $n$.

\item[$c8.$] The inner structure of the black boxes of the
universal processor may evolve through their inner processes on
inputs, however they have to behave well-defined through time, and
the computist cannot realize the evolution, because he does not
have access to the inner part of the universal processor.

\item[$c9.$] Let $f:\mathbb{N}\rightarrow \mathbb{N}$  and
$L\subseteq \Sigma^*$. The computist says that the time complexity
of the computation of the language $L$ in the computation
environment $E$, is less than $f$ whenever there exists a
syntax-procedure  $M\in \Xi$ such that the language defined by the
computist via $M$, i.e., $L(M)$, is equal to $L$, and for all
$x\in L$, $time_M(x)<f(|x|)$.

\item[$c10.$] The computist  defines the time complexity class
$\mathrm{P}_E\subseteq 2^{\Sigma^*}$ to be the set of all
languages that he can effectively define  in polynomial time. He
also defines the complexity class $\mathrm{NP}_E\subseteq
2^{\Sigma^*}$ as follows:
\begin{itemize}
\item[] $L\in \mathrm{NP}_E$  iff there exists $J\in \mathrm{P}_E$
and  a polynomial function $q$ such that for all $x\in
\Sigma^*$,\begin{center} $x\in L\Leftrightarrow\exists y\in
\Sigma^* (|y|\leq q(|x|) \wedge (x,y)\in J)$.\end{center}
\end{itemize}
\item[$c11.$] \textbf{\emph{Logical Deduction}}. We say a
computation environment is enough powerful whenever the computist
can carry out logical deduction using the universal processor.


\end{itemize}
\end{definition}

By definition of computation environments, we attempt to
mathematically model the situation that the human being has with
the notion of computation in the real world. In a computation
environment,
\begin{quote} a language $L$ (a function $f$) is effectively
calculable
 if and only if there exists a syntax-procedure $M\in\Xi$ such that  the computist  can define
the language $L$ (the function $f$) through it.\end{quote} In this
way, the syntax-procedures find  their meaning via the interaction
of the computist with the universal processor of the environment.

\begin{example}\label{tce}{\em \textbf{(The Turing Computation Environment)}.

Let $Q_T=\{h\}\cup\{q_i\mid   i\in \mathbb{N}\cup\{0\}\}$,
$\Sigma,\Gamma$ be two finite set  with $\Sigma\subseteq \Gamma$
and $\Gamma$ has a symbol $\triangle \in \Gamma-\Sigma$.

The universal processor $U_s=(TBOX_s,SBOX_s, INST_s, CONF_s)$ is
defined as follows:
\begin{itemize}
\item[1)] $INST_s=\{[(q,a)\rightarrow(p,b,D)]\mid p,q\in Q_T,
a,b\in \Gamma, D\in\{R,L\}\}$,

$(INST_s)_0=\{[(q,a)\rightarrow(p,b,D)]\in INST_s\mid q=q_0 \}$,
and


\item[2)] $CONF_s=\{(q,x\underline{a}z)\mid q\in Q_T, x,z\in
\Gamma^*, a\in \Gamma\}$,   for each $x\in \Sigma^*$,
$C_{0,x}=(q_0,\underline{\triangle} x)$, and for each
$C=(q,x\underline{a}z)\in CONF_s$, $y_C=xaz$.

\end{itemize}
Note that the programming language of $U_s$ is exactly the
standard syntax of configurations and transition functions of
Turing machines.
\begin{itemize}

\item[3)] Let $C=(q,xb_1\underline{a}b_2y)$ be an arbitrary
configuration then

\begin{itemize} \item $TBOX_s(C,[(q,a)\rightarrow (p,c,R)])$ is defined to be $C'=(p,xb_1c\underline{b_2}y)$,

\item $TBOX_s(C,[(q,a)\rightarrow (p,c,L)])$ is defined to be
$C'=(p,x\underline{b_1}cb_2y)$, and

\item for other cases $TBOX_s$ is defined to be $\bot$.

\end{itemize}
\item[4-] Let $C\in CONF_s$ be arbitrary
\begin{itemize}
\item if $C=(h,\underline{\triangle}x)$ then $SBOX_s(C)$ is
defined to be $YES$,

\item if $C=(h,x\underline{\triangle})$ then $SBOX_s(C)$ is
defined to be $YES$, and

\item otherwise $SBOX_s(C)$ is defined to be $NO$.
\end{itemize}

\item[5-] For each $M\in \Xi_s$, and $C=(q,x\underline{a}y)\in
CONF_s$,   if there exists $[(q,a)\rightarrow(p,b,D)]\in M$ for
some $p\in Q_T, b\in \Gamma$, and $D\in \{R,L\}$, then
$\upsilon(M,C)$ is defined to be $[(q,a)\rightarrow(p,b,D)]$ else
it is defined to be  $\bot$.
\end{itemize}
We call the computation environment $E_T=(U_s,O_f)$ (where $O_f$
is a computist with the free will) the \underline{Turing
Computation Environment}. }
\end{example}
Note that each syntax-procedure $M\in \Xi_s$ of the Turing
computation environment $E_T$ is exactly a transition function of
a Turing machine. Let $TM=\{T=(Q, \Sigma, \Gamma,q_0,\delta)\mid
Q\subseteq Q_T\}$ be the set of all Turing machines (according to
definition~\ref{turingd}) that the set of their states is a finite
subset of $Q_T$. Define two functions $\mathcal{F}_1:TM\rightarrow
\Xi_s$ and $\mathcal{F}_2:\Xi_s\rightarrow TM$ as follows:

\begin{itemize}
\item for each $T\in TM$, $\mathcal{F}_1(T)=\delta$,

\item for each $M\in \Xi_s$, $\mathcal{F}_2(M)=(Q, \Sigma,
\Gamma,q_0,M)$, where $Q$ is the set of all states in $Q_T$ which
appeared in the instructions of $M$.
\end{itemize}

One may easily prove that, for all $T\in TM$, if the human being
computes $L(T)$ in the real world, and the computist $O_f$
computes $L(\mathcal{F}_1(T))$ in the Turing computation
environment $E_T$ then $L(T)=L(\mathcal{F}_1(T))$.

Also, for all $M\in \Xi_s$, if the computist $O_f$ computes $L(M)$
in the Turing computation environment $E_T$, and the human being
computes $\mathcal{F}_2(M)$ in the real world then
$L(M)=L(\mathcal{F}_2(M))$.

\begin{theorem}\label{lang} (\textbf{GV}).

\begin{itemize}
\item[1-]For each syntax-procedure $M\in \Xi_s$, there exists a
Turing machine $T=\mathcal{F}_2(M)$ such that for every $x\in
\Sigma^*$, $x\in L(M)$ and $time_M(x)=n$ iff $x\in L(T)$ and
$time_T(x)=n$.

\item[2-] For each Turing machine $T$, there exists a
syntax-procedure $M\in \Xi_s$, $M=\mathcal{F}_1(T)$, such that for
every $x\in \Sigma^*$, $x\in L(T)$ and $time_T(x)=n$ iff $x\in
L(M)$ and $time_M(x)=n$.

\end{itemize}
\end{theorem}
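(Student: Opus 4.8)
The plan is to prove both directions by establishing that the two functions $\mathcal{F}_1$ and $\mathcal{F}_2$ set up a step-by-step correspondence between computation paths in the Turing computation environment $E_T$ and computation paths of the associated Turing machines. The essential observation is that the universal processor $U_s$ was engineered so that its components mirror the standard Turing-machine definitions: $CONF_s$ is literally the set of Turing configurations, $TBOX_s$ implements exactly the transition function $\delta$ (one move left or right), $SBOX_s$ recognizes exactly the successful configurations $(h,\underline{\triangle}x)$ and $(h,x\underline{\triangle})$, and $\upsilon(M,C)$ selects the unique applicable instruction from $M$. So the whole content of the theorem is that these pieces, when chained together, produce the same sequences $C_0C_1\dots C_n$ that Definition~\ref{turingd} produces.

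First I would fix $M\in\Xi_s$ and set $T=\mathcal{F}_2(M)=(Q,\Sigma,\Gamma,q_0,M)$, then argue by induction on path length that a sequence $C_0C_1\dots C_n$ is a successful computation path of $M$ on $x$ in the sense of clause $c3$ if and only if it is a computation path of $T$ on $x$ ending in a successful configuration in the sense of Definition~\ref{turingd}. The base case is immediate since $C_0=C_{0,x}=(q_0,\underline{\triangle}x)$ in both settings. For the inductive step I would show that $C_{i+1}$ being obtained by applying $TBOX_s$ on $(C_i,\upsilon(M,C_i))$ coincides exactly with $C_{i+1}$ being obtained from $C_i$ by applying the transition function $\delta=M$: the determination condition on syntax-procedures guarantees $\upsilon(M,C_i)$ picks out the single instruction $[(q,a)\to(p,b,D)]\in M$ enabled at $C_i$, and the definition of $TBOX_s$ in item (3) of Example~\ref{tce} reproduces precisely the move prescribed by $\delta$. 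The termination conditions also match: clause $c3$ requires $SBOX_s(C_n)=YES$ together with $\upsilon(M,C_n)=\bot$ or $TBOX_s(C_n,\upsilon(M,C_n))=\bot$, which by items (4) and (5) says exactly that $C_n$ is a successful configuration of $T$. Because the paths coincide configuration-for-configuration, the length $n$ is the same, giving $time_M(x)=n$ iff $time_T(x)=n$ once I note that $time_T(x)$ counts the configurations in the path, matching clause $c7$.

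For part 2 the argument is the mirror image: given a Turing machine $T$, set $M=\mathcal{F}_1(T)=\delta$, observe that $\delta$ viewed as a set of instructions is a finite subset of $INST_s$ satisfying the determination condition (since $\delta$ is a partial function, at most one instruction is enabled at any configuration), hence $M\in\Xi_s$, and then invoke the same induction to transport paths of $T$ to successful computation paths of $M$ and back. The only subtlety I anticipate is a bookkeeping one rather than a genuine obstacle: I must check that the \emph{acceptance} conditions line up, namely that the Turing definition's successful configuration $(h,x\underline{\triangle})$ or $(h,\underline{\triangle}x)$ corresponds to the environment's joint requirement of $SBOX_s(C_n)=YES$ and non-applicability of any instruction, and conversely that a path the environment halts on is genuinely a halting path of $T$. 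This requires confirming that when $SBOX_s$ outputs $YES$ the path indeed terminates in the $c3$ sense; since halting in state $h$ means no transition of $\delta$ applies (as $h\notin Q-\{h\}$ is outside the domain of $\delta$), $\upsilon(M,C_n)=\bot$ follows automatically, so the two notions of ``stopping'' coincide. I expect this halting-condition matching to be the main point needing care, while the step-by-step simulation is a routine induction driven directly by the definitions in Example~\ref{tce}.
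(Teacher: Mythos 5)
Your proposal is correct and follows essentially the same route as the paper: the paper's own proof of Theorem~\ref{lang} is literally ``It is straightforward,'' and what it deems straightforward is exactly the verification you spell out --- that $CONF_s$, $TBOX_s$, $SBOX_s$ and $\upsilon$ were defined to mirror Definition~\ref{turingd} term by term, so computation paths correspond configuration-for-configuration under $\mathcal{F}_1$ and $\mathcal{F}_2$, with the halting conditions agreeing because no transition of $\delta$ applies in state $h$. Your induction on path length, including the care taken over the termination clause of $c3$, is precisely the fleshed-out version of the argument the paper leaves implicit.
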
\begin{proof} It is straightforward.\end{proof}

The above theorem is true from the god's view. The computist who
lives inside  the environment cannot prove the above theorem as he
does not have access to the inner structure of the universal
processor $U_s$. However he is always free to propose a hypothesis
about its computation environment  like Church and Turing did.
Note that since we (the human beings) do not have access to the
inner structure of our mind, we call the Church Turing statement
to be a thesis. If we  know that how effective procedures are
executed by our mind, then we should not call the Church Turing
statement to be a thesis.

\begin{corollary} (\textbf{GV}).~ The time complexity classes in the Turing computation environment are exactly the
same classes of Turing machines. Particulary,

\begin{itemize}
\item[1-] $\mathrm{P}_{E_T}=\mathrm{P}$, and

\item[2-] $\mathrm{NP}_{E_T}=\mathrm{NP}$.
\end{itemize}

\end{corollary}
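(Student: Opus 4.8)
The plan is to reduce everything to Theorem~\ref{lang}, which already supplies a language- and time-complexity-preserving correspondence between syntax-procedures in $\Xi_s$ and Turing machines via the maps $\mathcal{F}_1$ and $\mathcal{F}_2$. Since this corollary is marked \textbf{GV}, I am entitled to argue from the god's view and so may freely invoke that theorem, which itself depends on access to the inner structure of $U_s$.

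For part 1, I would prove the two inclusions separately. To show $\mathrm{P}_{E_T}\subseteq \mathrm{P}$, take $L\in \mathrm{P}_{E_T}$; by clauses $c9$ and $c10$ there is a syntax-procedure $M\in\Xi_s$ with $L(M)=L$ and a polynomial $p$ such that $time_M(x)<p(|x|)$ for all $x\in L$. Applying Theorem~\ref{lang} part 1 to $M$ yields the Turing machine $T=\mathcal{F}_2(M)$ with $L(T)=L(M)=L$ and $time_T(x)=time_M(x)$ for every $x$, so $T$ witnesses $L\in\mathrm{P}$ with the same polynomial bound $p$. The reverse inclusion $\mathrm{P}\subseteq \mathrm{P}_{E_T}$ is symmetric: given a Turing machine $T$ witnessing $L\in\mathrm{P}$, Theorem~\ref{lang} part 2 produces $M=\mathcal{F}_1(T)$ with the same language and the same per-input time complexity, so the computist defines $L$ in polynomial time and $L\in\mathrm{P}_{E_T}$.

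For part 2, I would observe that Definition~\ref{class} and clause $c10$ give syntactically identical definitions of $\mathrm{NP}$ and $\mathrm{NP}_{E_T}$: in both cases $L$ lies in the class exactly when there exist a polynomial $q$ and a language $J$ --- drawn from $\mathrm{P}$ in one case and from $\mathrm{P}_{E_T}$ in the other --- such that $x\in L\Leftrightarrow\exists y\in\Sigma^*(|y|\leq q(|x|)\wedge(x,y)\in J)$. The verifier condition is verbatim the same; the only difference is the collection from which the polynomial-time language $J$ is chosen. Having established $\mathrm{P}_{E_T}=\mathrm{P}$ in part 1, this collection is the same in both definitions, and hence $\mathrm{NP}_{E_T}=\mathrm{NP}$ follows immediately.

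I expect no genuine obstacle here: the whole weight of the argument rests on Theorem~\ref{lang}, and the corollary is essentially a bookkeeping exercise once that correspondence is in hand. The only point requiring a little care is matching the time bounds exactly --- in particular checking that ``less than $f$'' is stipulated only on inputs $x\in L$ both in the Turing setting and in clause $c9$, so that the identity $time_M(x)=time_T(x)$ transports the polynomial bound without change. Since the two definitions of the time-bound predicate are phrased identically, this is immediate.
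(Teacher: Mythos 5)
Your proposal is correct and matches the paper's intent exactly: the paper offers no separate proof for this corollary, presenting it as an immediate consequence of Theorem~\ref{lang}, which is precisely the reduction you carry out (part 1 by transporting witnesses along $\mathcal{F}_1$ and $\mathcal{F}_2$, part 2 by noting the $\mathrm{NP}$ definitions coincide once the $\mathrm{P}$ classes do). Your write-up simply makes explicit the bookkeeping the paper leaves implicit.
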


\begin{definition}
A universal processor $U=(TBOX,SBOX, INST, CONF)$ is called to be
static, whenever the inner structure of two boxes $TBOX$ and
$SBOX$ does not change due to interaction with the computist. It
is called persistently evolutionary whenever the the inner
structure of at least one boxes changes but persistently, i.e., in
the way that the boxes work well-defined.
\end{definition} The universal processor of the Turing computation environment defined in the
example~\ref{tce} is static, in Section~\ref{PECE}, We introduce a
computation  environment which universal processor persistently
evolve.

 We may construct lots of computation environments. But we
are interested in those computation environments that are

\begin{itemize}\item
\emph{physically plausible}   \item \emph{enough
powerful,}\end{itemize} and thus could be considered as
alternatives of the computation environment of the real world.

 By enough powerful,
we mean that the computability power of  the universal processor
must be enough high such that the the computist who lives in the
environment can carry out all deductions in first order logic that
the ideal mathematician can carry out in the real world.

Note that if a theory $\mathcal{T}$ in the first order logic is
recursively enumerable then the set of all statements that could
be logically derived from the theory is also recursively
enumerable, that is, deduction is Turing computable. Hence, We
consider the following definition.

\begin{definition}\label{enpo}
A computation environment $E=(U,O)$ is enough powerful, whenever
for every recursively enumerable language $L$, there exists a
syntax-procedure $M$ in the environment that the computist can
define $L$ through $M$, i.e., $L(M)=L$.
\end{definition}

\begin{proposition} The Turing computation environment $E_T$ is
enough powerful.
\end{proposition}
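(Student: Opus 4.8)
The plan is to reduce the statement to the already-established correspondence between the Turing computation environment $E_T$ and ordinary Turing machines, and then invoke the classical fact that every recursively enumerable language is accepted by some Turing machine. By Definition~\ref{enpo}, I must show that for every recursively enumerable $L\subseteq\Sigma^*$ there is a syntax-procedure $M\in\Xi_s$ with $L(M)=L$, where the language $L(M)$ is understood in the sense of condition $c3$ of Definition~\ref{CE} (the set of strings for which the computist $O_f$ can construct a successful computation path of $M$). So the target is really a statement about what the free-willed computist can realize by interacting with the static universal processor $U_s$ of Example~\ref{tce}.

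First I would recall the standard characterization: a language $L$ is recursively enumerable if and only if there exists a Turing machine $T$ (in the sense of Definition~\ref{turingd}) with $L(T)=L$; this is a classical result I may assume from outside the paper. Next, given such a $T$, I apply the translation map $\mathcal{F}_1:TM\to\Xi_s$ introduced after Example~\ref{tce} to obtain the syntax-procedure $M=\mathcal{F}_1(T)=\delta$, the transition function of $T$ viewed as a finite subset of $INST_s$. I should check that this $M$ genuinely lies in $\Xi_s$, i.e.\ that it satisfies the determination condition of the syntax-procedure definition; this follows because $\delta$ is a (partial) \emph{function}, so for any configuration $C=(q,x\underline{a}y)$ at most one instruction $[(q,a)\to(p,b,D)]$ of $M$ has its left-hand side matching $(q,a)$, and hence $TBOX_s(C,\iota)\in CONF_s$ for at most one $\iota\in M$.

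The central step is to match the two notions of acceptance. By Theorem~\ref{lang}, part 2, for this $T$ and its image $M=\mathcal{F}_1(T)$ we have, for every $x\in\Sigma^*$, that $x\in L(T)$ and $time_T(x)=n$ iff $x\in L(M)$ and $time_M(x)=n$; in particular $L(M)=L(T)=L$. I would spell out why the computist $O_f$, exercising his free will from condition $c2$, can in fact construct the successful computation path required by $c3$ whenever $x\in L(T)$: starting from $C_{0,x}=(q_0,\underline{\triangle}x)$ he repeatedly applies $TBOX_s$ using $\upsilon(M,\cdot)$ to select the unique applicable instruction, faithfully reproducing the Turing computation path of $T$ on $x$, and the path terminates in a configuration on which $SBOX_s$ outputs $YES$ precisely because $C_n$ is a successful configuration of $T$. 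This gives $L(M)=L$, which is exactly what Definition~\ref{enpo} demands.

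I do not expect a serious obstacle here, since the heavy lifting is done by Theorem~\ref{lang}; the only points needing care are the two bookkeeping checks just mentioned — that $\mathcal{F}_1(T)$ meets the determination condition and that the computist's path-construction in $c3$ coincides step-for-step with the Turing-machine computation path. The mildest subtlety is that $L(M)$ is \emph{a priori} computist-dependent (as the paper stresses in $c3$), so strictly I am appealing to Theorem~\ref{lang}, which is a \textbf{GV} (god's-view) result, to pin down $L(M)$ for the specific computist $O_f$; this is legitimate because the proposition is a meta-level claim about the environment rather than something the internal computist is asked to prove.
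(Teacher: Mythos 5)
Your proposal is correct and takes what is evidently the paper's intended route: the paper's own proof is literally just ``It is straightforward,'' and the natural filling-in is exactly your argument --- take a Turing machine $T$ accepting the given recursively enumerable language, pass to $M=\mathcal{F}_1(T)\in\Xi_s$, and invoke Theorem~\ref{lang}, part 2, to conclude $L(M)=L(T)=L$ as required by Definition~\ref{enpo}. Your two bookkeeping checks (the determination condition for $\mathcal{F}_1(T)$ and the step-for-step agreement between the computist's path construction under $c3$ and the Turing computation path) are precisely the content the paper compresses into the word ``straightforward.''
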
\begin{proof} It is straightforward.
\end{proof}

By physical plausibility, we mean the inner structures of the
black boxes have to be as physically plausible as a Turing machine
is.  Copeland gives four criteria that any set of instructions
that makes up a procedure should satisfy in order to be
characterized as an \emph{effective procedure} (see the
Church-Turing Thesis, Chapter 2, Page~21,~\cite{kn:AS}):
\begin{itemize}
\item[1)] Each instruction is expressed by a finite string of
Alphabet, that is the elements of $INST$ must be definable  by
finite strings.

\item[2)] The instructions are basic and produce the result in one
step (or at most in  a finite number of steps).

\item[3)] They can be carried out by the computist (the human
being) unaided by any machinery, ignoring the internal actions of
the universal processor (his mind).

\item[4)] They demand no insight on the part of the computist
carrying it out (the computist just simply inputs an instruction
and a configuration in the transition box, and receives the next
configuration. No insight is needed).

\end{itemize}

Also a procedure has to be   \emph{finite}, \emph{definite} (each
instruction must be clear and unambiguous), and \emph{effective}
(every instruction (like $[(p,a)\rightarrowtail (q,b,R)$]) has to
be sufficiently basic)~\footnote{See the definition of algorithm
presented in~page~21,~\cite{kn:AS}  which is accepted by most
computer scientist and engineers}.

Inspired by Copeland's criteria, we define physical plausibility
for computation environments.

\begin{definition}\label{pypl}
A computation  environment $E=(U,O)$, where
$U=(TBOX,SBOX,INST,CONF)$ is \emph{physically plausible} whenever
\begin{itemize}
\item[a1.] The inner structure of two black boxes $TBOX$ and
$SBOX$ are as physically plausible as the human being knows the
Turing machines are.

\item[a2.] The instructions in $INST$ can be described in finite
words,

\item[a3.] The configurations in $CONF$ can be described in finite
words.
\end{itemize}
\end{definition}

\begin{corollary} Every computation environment $E=(U=(TBOX,SBOX,INST,CONF),O)$
which \begin{itemize}\item[1.] $INST=INST_s$, \item[2.]
$CONF=CONF_s$ (see the definition of $E_T$ in example~\ref{tce}),
\item[3.] and the inner structure of $TBOX$ and $SBOX$ are
physically plausible machines,\end{itemize} is physically
plausible.
\end{corollary}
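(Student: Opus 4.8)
The plan is to verify the corollary by directly checking the three defining clauses a1, a2, a3 of physical plausibility (Definition~\ref{pypl}) against the three hypotheses, since the corollary is essentially an instantiation of that definition restricted to the data $U=(TBOX,SBOX,INST,CONF)$ (the computist $O$ plays no role in Definition~\ref{pypl}). First I would observe that clause a1 is handed to us for free: hypothesis~3 states precisely that the inner structures of $TBOX$ and $SBOX$ are physically plausible machines, which is exactly what a1 demands.

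Next I would turn to a2 and a3, which concern the finite describability of $INST$ and $CONF$. Since $INST=INST_s$ and $CONF=CONF_s$ by hypotheses~1 and~2, it suffices to inspect the explicit presentations given in Example~\ref{tce}. An element of $INST_s$ has the form $[(q,a)\rightarrow(p,b,D)]$ with $p,q\in Q_T$, $a,b\in\Gamma$, and $D\in\{R,L\}$; an element of $CONF_s$ has the form $(q,x\underline{a}z)$ with $q\in Q_T$, $a\in\Gamma$, and $x,z\in\Gamma^*$. In each case the object is assembled from finitely many components, each of which is itself a finite symbolic datum, so each instruction and each configuration can be written down as a finite string. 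This gives a2 and a3.

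The one point that needs genuine care is that $Q_T=\{h\}\cup\{q_i\mid i\in\mathbb{N}\cup\{0\}\}$ is an infinite set, so both $INST_s$ and $CONF_s$ are infinite, and one must not confuse ``the set is describable in finite words'' with ``the set is finite.'' The correct reading of a2 and a3 is that every individual element admits a finite description. The hard part, such as it is, is simply to articulate this carefully: each state $q_i$ is finitely described by its index $i$ (a natural number, hence writable in finitely many symbols), $\Gamma$ is finite so each of $a,b$ is a single symbol, $D$ ranges over the two-element set $\{R,L\}$, and each tape word $x,z$ is by definition a finite string over $\Gamma$. Hence every member of $INST_s$ and of $CONF_s$ is finitely describable even though the ambient sets are infinite. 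Having established a1 from hypothesis~3 and a2, a3 from the explicit form of $INST_s$ and $CONF_s$, all clauses of Definition~\ref{pypl} are met and the corollary follows.
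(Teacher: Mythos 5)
Your proof is correct and follows essentially the same route as the paper: a1 is immediate from hypothesis~3, and a2, a3 hold because each instruction in $INST_s$ and each configuration in $CONF_s$ is describable by a finite string. Your added care in distinguishing ``the set is infinite'' from ``each element is finitely describable'' is a sound elaboration of the paper's one-line argument, not a departure from it.
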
\begin{proof} Since the instructions in $INST_s$
and configurations in $CONF_s$ are describable in finite strings,
the items a2, and a3 of definition~\ref{pypl} is fulfilled.
\end{proof}

\begin{question}
Is there a \emph{physically plausible} and \emph{enough powerful}
computation environment $E$ such that $\mathrm{P}_E$ is not equal
to $\mathrm{NP}_E$   for the computist who uses the universal
processor freely?
\end{question}
In the next section, we introduce persistently evolutionary Turing
machines as a kind of hypercomputations. Then in  a sequel, based
on this hypercomputation, we construct a physically plausible
computation environment that the successful box of its universal
processor is a persistently evolutionary Turing machine, and
$\mathrm{P}$ is not equal to $\mathrm{NP}$ in the environment.

\section{Persistently Evolutionary Turing Machines} \label{peTM}

 The notion of being constructive in Brouwer's
intuitionism  goes beyond the Church-Turing thesis (see the
appendix~\ref{intu}). In this section, inspired by Brouwer's
choice sequences, we introduce persistently evolutionary Turing
machines. A persistently evolutionary Turing machine is a machine
that its inner structure during its computation on any input may
evolve~\footnote{Similar computing machines (dynamic machines)
were presented in the author's PhD thesis~\cite{kn:phdr}.}. But
this evolution is in the way that if a computist  does not have
access to the inner structure of the machine  then he cannot
recognize whether the machine evolves or not.

\begin{definition}
Let $M_1$ and $M_2$ be  two (deterministic) Turing machines, and
$x\in \Sigma^*$ be arbitrary. We say $M_1$ and $M_2$ are
$x$-equivalent, denoted by $M_1\equiv_x M_2$ whenever

\begin{itemize}

\item[] if one of the two machines $M_1$ and $M_2$ outputs $y$ for
input  $x$, then the other one also outputs the same $y$ for the
same input $x$.
\end{itemize}
\end{definition}

\begin{definition} A Persistently evolutionary Turing machine is a
couple $N=(\langle z_0,z_1,...,z_i\rangle, f)$ where $\langle
z_0,z_1,...,z_i\rangle$ is a \emph{growing sequence}  of codes of
deterministic Turing machines, and   $f$ (called the persistently
evolutionary function) is a computable partial function from
$\Sigma^*\times \Sigma^*$ to $\Sigma^*$ such that for any code of
a Turing machine, say $y$, and any string $x\in \Sigma^*$, if $y$
halts on $x$ then $f(y,x)$ is defined and it is a code of a new
Turing machine. The function $f$ satisfies the following condition
(that we call it the persistent condition):
\begin{itemize}
\item[-] for every finite sequence  $(x_1,x_2,...,x_n, x_{n+1})$
of $\Sigma^*$, and every Turing machine $y_0$, if
$y_1=f(y_0,x_1)$, $y_2=f(y_1,x_2)$,..., $y_n=f(y_{n-1},x_n)$ and
$y_{n+1}=f(y_n,x_{n+1})$ are defined then we have for all $0\leq
i\leq n$, $y_{n+1}\equiv_{x_{i+1}} y_i$.
\end{itemize}
Whenever an input $x$ is given to $N$, the output of the
evolutionary machine $N$ is computed according to the $z_i$
($z_1=f(z_0,x_1)$, $z_2=f(z_1,x_2)$,... $z_i=f(z_{i-1},x_i)$,
where $x_1,x_2,...,x_i$ are strings that sequentially  have given
as inputs to $N$  \underline{until now}, and $N$ has halted  for
all of them), and the machine evolves to $N=(\langle
z_0,z_1,...,z_i,z_{i+1}= f(z_i,x)\rangle, f)$ (if $z_i$ halts for
$x$). Note that since the evolution happens persistently, as soon
as $N$ provides an output for an input $x$, if we will input the
same $x$ to $N$ again in future, then $N$ provides the same output
as before. It says that, the machine $N$ behaves well-defined as
an input-output black box.
\end{definition}

The persistently evolutionary Turing computation could be assumed
as one of forms of Hypercomputation~\cite{kn:TO, kn:AS}. Note that
at each moment of time, a persistently evolutionary Turing machine
has a finite structure, but during computations on inputs, its
structure may change. So it is not possible to encode a
persistently evolutionary Turing machine in a finite word. A
persistently evolutionary Turing machine is an interactive machine
(see Chapter~5 of~\cite{kn:AS}, and~\cite{kn:intcom}) that evolves
according to how it interacts with its users. One may compare
persistently evolutionary Turing machines with  persistent Turing
machines (PTM). The   difference between these two kinds of
machines is that PTM's are static machines that  transform  input
streams (infinite sequence of strings) to output streams
persistently~\cite{kn:PTM,kn:weg,kn:kos}, whereas persistently
evolutionary Turing machines are evolutionary machines that
transform input strings to output strings.

 We
may start with two equal persistently evolutionary Turing machines
$N_1=(z_0,f)$ and $N_2=(z'_0,f')$ where $z_0=z'_0$ and $f=f'$, but
as we input strings to two machines in different orders the
machines may evolve in different ways.

\begin{example}
Let $z_0$ be a code of an arbitrary Turing machine, and
$I:\Sigma^*\rightarrow\Sigma^*$ be the identity function, $I(x)=x$
for all $x\in \Sigma^*$. Then $N=(z_0,I)$ is a persistently
evolutionary Turing machine. Thus any Turing machine can be
considered as a persistently evolutionary Turing machine.
\end{example}

\begin{fact} \label{fact} A user who just has access to the
input-output behavior of an evolutionary Turing machine $N$,
cannot become conscious whether the machine $N$ evolves or not. In
other words, if we put a persistently evolutionary machine in a
black box, a computist  can never be aware that whether it is a
(static) Turing machine   in the black box, or   a persistently
evolutionary one.
\end{fact}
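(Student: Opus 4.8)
The plan is to exploit the persistent condition directly: it is precisely the hypothesis guaranteeing that the observable input--output behaviour of $N$ is, at every finite stage, indistinguishable from that of a fixed (static) Turing machine. Since the computist only sees input--output behaviour (condition $c1$ of Definition~\ref{CE}) and at any moment has issued only finitely many inputs, it suffices to show that for every finite interaction history there is a single static Turing machine reproducing all the outputs observed so far.

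First I would fix an arbitrary finite interaction. Suppose the computist has sequentially fed inputs $x_1,x_2,\ldots,x_m$ on which $N$ halted, so that the code of $N$ has evolved through $z_0$, $z_1=f(z_0,x_1)$, \ldots, $z_m=f(z_{m-1},x_m)$. By the semantics of $N$, the output the computist observed on $x_j$ was computed by the code $z_{j-1}$ current at that moment; call it $o_j = z_{j-1}(x_j)$. Next I would apply the persistent condition with $n=m-1$ and $y_0=z_0$. It yields $z_m\equiv_{x_{i+1}}z_i$ for every $0\leq i\leq m-1$, that is, $z_m\equiv_{x_j}z_{j-1}$ for each $1\leq j\leq m$. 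By the definition of $x$-equivalence this gives $z_m(x_j)=z_{j-1}(x_j)=o_j$ for every $j$. Hence the single static Turing machine $T=z_m$ produces exactly the outputs $o_1,\ldots,o_m$ on the inputs $x_1,\ldots,x_m$, i.e. the entire history the computist has witnessed.

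I would then close the argument by appealing to well-definedness, which the same condition secures: if the computist re-queries some earlier $x_j$, both $N$ (now with code $z_m$ or later) and $T$ return $o_j$ again, so no repeated query can ever expose a discrepancy. Therefore every finite observation of $N$ is consistent with the hypothesis that a static machine sits in the black box, which is exactly the claimed unawareness.

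The main obstacle to watch is that the static witness $T=z_m$ is not a single machine valid for all time; it may have to be replaced by a longer code $z_{m'}$ once further inputs are supplied. I would stress that this does not weaken the Fact: indistinguishability is a property of finite observation, and the persistent condition guarantees that at each stage \emph{some} static machine accounts for everything seen. What must be argued carefully is that the computist can never force an inconsistency by clever re-querying, and this is exactly the content of the equivalences $z_m\equiv_{x_j}z_{j-1}$ --- the current code always agrees with every past code on the very input that triggered the corresponding evolution step.
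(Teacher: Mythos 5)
Your proof is correct, and it is more detailed than anything the paper itself offers: the paper states this Fact without an attached proof, justifying it only (i) informally, in the subsection on persistent evolution, by the observation that a change can only ever be detected by re-querying a past input, which well-definedness rules out, and (ii) formally, much later, by a proposition that every \emph{finite} set of input--output pairs $\{(i_k,o_k)\}$ is simultaneously realizable by some static Turing machine and by some persistently evolutionary one (essentially a lookup-table construction), whence the corollary that at every stage the observer must hold both hypotheses possible. Your route is genuinely different in the choice of witness: instead of invoking an ad hoc lookup-table machine, you extract the static witness from the evolving machine itself, showing via the persistent condition that the \emph{current} code $z_m$ satisfies $z_m\equiv_{x_j}z_{j-1}$ for every queried $x_j$, hence reproduces the entire observed history. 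This is tighter and more intrinsic, and it makes explicit the point the paper only gestures at --- that the evolution can never contradict the hypothesis ``$z_m$ was in the box all along.'' What the paper's lookup-table argument buys in exchange is the converse direction almost for free: any finite behaviour of a genuinely static machine is also realizable by a persistently evolutionary one, so the indistinguishability is two-sided. In your write-up that direction is only implicit; it follows trivially from the paper's example that any Turing machine $z_0$ together with the identity evolution function is itself a persistently evolutionary machine, and you should note this to cover the full strength of the second sentence of the Fact. Your closing caveat --- that the static witness must be allowed to depend on the stage, since indistinguishability is a property of finite observation --- is exactly the right formulation and matches the paper's corollary.
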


\begin{corollary}
Let $E=(U,O)$ be a computation environment that the computist $O$
does not have access to the inner structure of the processor $U$.
The computist $O$ can never be aware that whether its environment
persistently evolves or not.
\end{corollary}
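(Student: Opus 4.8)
The plan is to reduce the statement to Fact~\ref{fact}, which already asserts the indistinguishability of a static Turing machine from a persistently evolutionary one for a user who observes only input-output behaviour. The entire content of the corollary is to transport that indistinguishability from the level of a single black-box machine to the level of the universal processor of a computation environment, so the work is almost entirely a matter of translating into the vocabulary of Definition~\ref{CE}.

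First I would pin down what it means for the computist $O$ to \emph{become aware} that his environment evolves. By condition $c1$ of Definition~\ref{CE}, the only data available to $O$ are the outputs that $TBOX$ and $SBOX$ return on the pairs $(C,\iota)$ and the configurations $C$ that he chooses to submit; he never inspects the inner structure of either box. Hence $O$ could detect evolution only if the record of these input-output pairs were inconsistent with every static processor, that is, only if some input he had already queried ever returned a different output on a later query. I would then argue that this can never happen: the definition of a persistently evolutionary machine, together with condition $c8$, guarantees that the boxes behave well-defined through time. The persistent condition, phrased via $\equiv_x$, says precisely that once the evolved machine has produced an output for an input it has already processed, it yields the same output for that input forever after. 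Consequently the sequence of observations $O$ accumulates is identical to one he could have obtained from a fixed, non-evolving processor computing the same input-output function, and applying Fact~\ref{fact} to each of the two boxes $TBOX$ and $SBOX$ shows that no such observation can separate the evolving case from the static case.

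Finally I would conclude that, since for every possible history of $O$'s interactions with a persistently evolutionary $U$ there is a static $U'$ producing exactly the same history, no amount of interaction -- finite or unbounded -- lets $O$ decide which of the two he inhabits; therefore $O$ can never be aware whether his environment persistently evolves or not. The only step that needs care is the formalization of \emph{well-defined behaviour}, but this is handed to us directly by the persistent condition and the definition of $\equiv_x$, so I expect no genuine obstacle: the corollary is essentially a restatement of Fact~\ref{fact} read through conditions $c1$ and $c8$ of Definition~\ref{CE}.
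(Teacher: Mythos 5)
Your proposal is correct and follows exactly the route the paper intends: the corollary is stated as an immediate consequence of Fact~3.4, and your argument is precisely that reduction, made explicit by noting (via conditions $c1$ and $c8$ and the persistent condition's well-definedness guarantee) that the computist's entire record of interaction with $TBOX$ and $SBOX$ is always realizable by a static processor, so the Fact applies to the boxes of the universal processor. The paper itself gives no further proof, so your elaboration of the translation step is, if anything, more detailed than the original.
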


In the next example, we let $\mathrm{NFA}_1$ be the class of all
nondeterministic finite automata that for each $M\in
\mathrm{NFA}_1$, each  state $q$ of $M$, and $a\in \Sigma$, there
exists at most one transition from $q$ with label $a$.

\begin{example}\label{auto}   We define a function $h:
\mathrm{NFA}_1\times\Sigma^*\rightarrow \mathrm{NFA}_1$ as
follows. Let $M\in\mathrm{NFA}_1$, $M=\langle Q,
q_0,\Sigma=\{0,1\},\delta:Q\times\Sigma\rightarrow Q, F\subseteq
Q\rangle$,  and $x\in \Sigma^*$. Suppose $x=a_0a_1\cdots a_k$
where $a_i\in \Sigma$. Applying the automata $M$ on $x$, one of
the three following cases may happen:
\begin{itemize}
\item[1.] The automata $M$ could read all $a_0,a_1\cdots ,a_k$
successfully and stops in an accepting state.    Then we let
$h(M,x)=M$.

\item[2.] The automata $M$ could read all $a_0,a_1\cdots ,a_k$
successfully and stops in a state $p$ which is not an accepting
state. If the automata $M$ can  transit from $p$ to an  accepting
state by reading \emph{one } alphabet, then we define $h(M,x)=M$.
If it cannot transit (to an accepting state)  then we define
$h(M,x)$ to be a new automata $M'=\langle Q,
q'_0,\Sigma=\{0,1\},\delta':Q'\times\Sigma\rightarrow Q',
F'\subseteq Q'\rangle$, where $Q'=Q$, $\delta'=\delta$,
$F'=F\cup\{p\}$.

\item[3.] The automata $M$ cannot read all $a_0,a_1\cdots ,a_k$
successfully,and after reading a part of $x$, say $a_0a_1\cdots
a_i$, $0\leq i\leq k$, it crashes in  a state $q$ that
$\delta(q,a_{i+1})$ is not defined. In this case, we let $h(M,x)$
be a new automata $M'=\langle Q,
q'_0,\Sigma=\{0,1\},\delta':Q'\times\Sigma\rightarrow Q',
F'\subseteq Q'\rangle$, where $Q'= Q\cup \{s_{i+1},s_{i+2},\cdots,
s_k\}$ (all $s_{i+1},s_{i+2},\cdots, s_k$ are different states
that does not belong to $Q$), $\delta'=\delta\cup
\{(q,a_{i+1},s_{i+1}), (s_{i+1},a_{i+2},s_{i+2}),\cdots,
(s_{k-1},a_k,s_{k})\}$, and $F'=F\cup\{ s_k\}$.
\end{itemize}
For each $M\in NFA_1$, we  let $T_M$ be a Turing machine that for
each input $x\in \Sigma^*$, the machine $T_M$ first constructs the
automata $h(M,x)$, and if $h(M,x)$ accepts $x$, then $T_M$ outputs
$1$, else it outputs $0$.

We define the persistently evolutionary Turing machine
$PT_1=\langle \lfloor T_{M_0}\rfloor, f\rangle$, where
$M^0=\langle Q^0=\{q_0\}, q_0,\Sigma=\{0,1\},\delta^0=\emptyset,
F^0=\emptyset\rangle$, and $f(\lfloor T_M\rfloor, x)=\lfloor
T_{h(M,x)}\rfloor$.
\end{example}

A language which is intended by a persistently evolutionary Turing
machine  is an unfinished object (similar to Brouwer's choice
sequences). At each stage of time, only a finite part of it,  is
recognized with the human being who intends the language.
Persistently evolutionary Turing machines exist in time and are
temporal dynamic mental
constructions~(see~page~16~of~\cite{kn:BH}). Also a language
intended by a persistently evolutionary Turing machine is
\emph{user-dependent}. Two users      with two persistently
evolutionary Turing  machines $N_{1}$ and $N_{2}$ with the same
initial structure may intend two different languages. A language
which is defined through a persistently evolutionary Turing
machine is not predetermined and is dependent to the free will of
the user.

 The
initial structure $N=(z_0,f)$ of a persistently evolutionary
Turing machine is constructed at a particular moment of time, and
then evolves as the user chooses further strings to input. For
persistently evolutionary Turing machines  what remains invariant
is the character of the machine as a evolutionary machine, an
evolution that started at a particular point in time and preserves
well-definedness. The machine evolves but it is the same machine
that evolves and the machine is an individual unfinished object.
Note that the user is not allowed to reset the machine and goes
back to past. It is because that the evolution is a characteristic
of persistently evolutionary Turing machines.

\begin{definition}~
\begin{itemize}\item[]
We say a persistently evolutionary Turing machine $N=(\langle
z_0,z_1,...,z_n\rangle,f)$ evolves in a feasible time whenever the
function $f$ can be computed in polynomial time.

\item[] We say a persistently evolutionary Turing machine
$N=(\langle z_0,z_1,...,z_n\rangle,f)$ works in polynomial time,
whenever there exists a polynomial function $p$, such that for
every $x\in \Sigma^*$, if we input $x$ to $N$, the machine $z_i$
provides the outputs in less than $p(|x|)$ times, and the function
$f$ constructs the function $z_{i+1}=f(z_i,x)$ in less than
$p(|x|+|z_i|)$.\end{itemize}
\end{definition}

\begin{proposition}\label{feas}
The persistent evolutionary Turing machine $PT_1$ works in
polynomial time.
\end{proposition}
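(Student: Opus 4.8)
The plan is to verify the two polynomial-time requirements from the definition directly against the construction of $PT_1$ in Example~\ref{auto}. Recall that $PT_1=\langle \lfloor T_{M^0}\rfloor, f\rangle$ with $f(\lfloor T_M\rfloor, x)=\lfloor T_{h(M,x)}\rfloor$, where each $T_M$ runs the automaton $h(M,x)$ on $x$ and outputs $1$ or $0$. So there are two things to bound as polynomials: first, the runtime of the current machine $z_i=\lfloor T_M\rfloor$ on an input $x$; and second, the cost of evolving $z_i$ to $z_{i+1}=f(z_i,x)$. I would treat these in turn.

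First I would bound the runtime of $T_M$ on input $x$ of length $n=|x|$. The machine $T_M$ must construct $h(M,x)$ and then simulate it on $x$. The key observation is that $M\in \mathrm{NFA}_1$ is deterministic in the sense stated (at most one transition per state-symbol pair), so simulating $M$ on $x$ is just following a single path of length at most $n$, which takes time polynomial in $n$ and the size $|M|$ of the automaton's description. Constructing $h(M,x)$ requires distinguishing the three cases of the definition: checking whether $M$ reads all of $x$ and lands in an accepting state (case~1), whether it lands in a non-accepting state from which one more symbol reaches an accepting state (case~2), or whether it crashes after reading a prefix $a_0\cdots a_i$ (case~3). Each case analysis amounts to a bounded number of passes over $x$ and a bounded search through the (finite) transition relation, and the modification to produce $M'$ (adding one accepting state in case~2, or adding at most $k-i\leq n$ fresh states and transitions in case~3) is linear in $n$. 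Hence there is a polynomial $p$ with the running time of $z_i$ on $x$ bounded by $p(|x|)$, as required; here I would note that the relevant size parameter stays controlled because $M$ grows by at most $n$ states per input.

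Second, I would bound the evolution cost, namely computing $z_{i+1}=f(z_i,x)=\lfloor T_{h(M,x)}\rfloor$ in time less than $p(|x|+|z_i|)$. This is essentially the same computation as above: producing the code of $T_{h(M,x)}$ requires computing $h(M,x)$ (already shown to be polynomial in $|x|$ and $|M|$, and $|M|$ is itself polynomially bounded by $|z_i|$) and then emitting the fixed wrapper code that turns an $\mathrm{NFA}_1$ into its deciding Turing machine $T_{(\cdot)}$. Since the wrapper is a fixed template into which the description of $h(M,x)$ is substituted, this is a polynomial-time transformation in $|x|+|z_i|$. Combining the two bounds and taking $p$ to be a common dominating polynomial establishes that $PT_1$ works in polynomial time in the sense of the definition.

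The step I expect to be the main obstacle is controlling the growth of the automaton's size across the evolution, since in case~3 the machine $h(M,x)$ acquires up to $|x|$ new states. A single evolution step is clearly polynomial, but one must confirm that the size parameter $|z_i|$ appearing in the bound $p(|x|+|z_i|)$ is the honest current size and that no step incurs superpolynomial blowup relative to that size; this is precisely why the definition phrases the second bound in terms of $|z_i|$ rather than $|x|$ alone. Once it is clear that each step adds only linearly many states in the length of the input being processed and that all three case checks and the code-emission step are polynomial in the current description size, the result follows.
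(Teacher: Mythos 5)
The paper's own proof of Proposition~\ref{feas} is the single sentence ``It is straightforward,'' so your write-up is a fleshing-out of the intended argument rather than a departure from it: you verify the two clauses of the definition (the runtime of the current machine $z_i$ on $x$, and the cost of the evolution step computed by $f$) directly against the construction of $PT_1$ in Example~\ref{auto}, which is the only sensible decomposition.

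There is, however, one step in your argument that does not hold up as literally written, and it is precisely the step you flag as the main obstacle. The definition's first clause demands that $z_i$ produce its output in fewer than $p(|x|)$ steps, where $p$ is a single polynomial, uniform over all stages $i$, depending on $|x|$ alone. Your justification --- ``a bounded search through the (finite) transition relation'' and ``$M$ grows by at most $n$ states per input'' --- does not deliver this: the size of the current automaton $M_i$ is determined by the entire interaction history, not by the current input, so $|M_i|$ is not bounded by any function of $|x|$. Since $T_{M_i}$ on input $x$ must construct (or at least consult) $h(M_i,x)$, its running time is at least on the order of $|M_i|$, and hence cannot be bounded by $p(|x|)$ uniformly in $i$; the per-step growth bound you cite controls the increment, not the accumulated size. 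What your argument actually establishes is a bound of the form $p(|x|+|z_i|)$ for \emph{both} clauses --- which is the shape of the definition's second clause, and is presumably the intended reading of the first clause as well. Under that charitable reading your proof is complete; under the literal reading the gap is real, but it is then a defect of the paper's claim as much as of your proof, since the paper's ``straightforward'' glosses over exactly this point.
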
\begin{proof}
It is straightforward.
\end{proof}

\subsection{Physical Plausibility}~\label{ppl}
 Hypercomputation extends the  capabilities of   Turing
computation via using new resources such as 1- infinite memory, 2-
infinite specification, 3- infinite computation and 4- the
interaction (see the appendix~\ref{intu}). Among these four
resources, the resources 1,2, and 3 do not seem physically
plausible as they have infinite structures.  But the forth one,
the interaction, seems physically plausible for the human being.
The human being interacts with its environment and   it could be
possible that its environment persistently evolve because of
interaction. The persistently evolutionary Turing machines use two
resources to be   Hypercomputations 1- evolution, and 2-
interaction.  Both of these two resources could be accepted by the
human being as physically plausible resources  (Biological
structures evolve in Darwin theory). One may hesitate to accept
that Persistently evolutionary Turing machines are physically
plausible due to his presupposition that the real world is a
Turing machine, but  if he releases himself  from this confinement
then it seems to him that persistently evolutionary Turing
machines are as physically plausible as Turing machines are. One
can implement a persistently evolutionary Turing machine on his
personal computer (assuming that the computer has an infinite
memory. Note that the same assumption is needed for executing
Turing machines on a computer). The difference between Turing
machines and persistently evolutionary Turing machines is that the
languages that the first category recognize are predetermined,
whereas in the second category, we can recognize an unfinished and
not predetermined language that depends on our interactions.

Therefore, we may list the following items for why    persistently
evolutionary Turing machines are physically plausible:

\begin{itemize}
\item[1-] The persistently evolutionary Turing computation is a
kind of interactive computation~\cite{kn:intcom} which today is
accepted as a new paradigm of computation by some computer
scientist.

\item[2-] The persistently  evolutionary Turing machines are as
plausible as Turing machines are. Both machines can be simulated
by a personal computer (assuming that the computer has an infinite
memory).

\item[3-] In Brouwer's intuitionism, choice sequences are accepted
as constructive mathematical objects. Choice sequences are
growing, unfinished objects. Therefore, the persistently
evolutionary Turing machines as growing unfinished objects are
acceptable as constructive mathematical objects in Brouwer's
intuitionism.
\end{itemize}

\subsection{Persistent  Evolution}
Suppose   $\mathbf{B}$ to be an input-output black box. For the
observer who does not  have access to the inner structure of the
black box, it is not possible   to get aware that whether the
inner structure of the black box persistently evolves or not. We
only sense a change whenever we discover that an event which has
been  sensed before is not going to be sensed similar to past.
Persistently evolution always respects the past. As soon as, an
agent experiences an event then whenever in future he examines the
same event, he will experience it similar to past. However,
persistent  evolution effects the future which has not been
predetermined, and not experienced by the agent yet. Therefore,

\begin{quote}it is not possible for an agent to distinguish between
persistent  evolution and being static based on the history of his
observation.
\end{quote}

\subsection{Free Will}\label{FWL}
Suppose $N=(z_0,f)$ is a persistently evolutionary Turing machine.
The machine $N$ could evolve in different ways due to the free
will of the user who chooses freely strings to input to $N$.  For
example, consider the persistently evolutionary Turing machine
$PT_1$. If you input two strings $111$, and $11$ respectively, the
machine $PT_1$ outputs $1$ for the first  input and outputs $0$
for the second one. Inputting the string $111$ made the machine to
evolve such that it cannot accept $11$ anymore. The time has sink
back to past, and the machine $PT_1$ evolved.

However, you \emph{had} the free will to input first $11$ and then
$111$, and if this case had happened, the machine would have
accepted both of them!

The user of a persistently evolutionary Turing machine cannot
change the past by his free will, but he can effect the future by
his free choices. As soon as a persistently evolutionary Turing
machine evolves, it has been evolved, and it is not possible to go
back to past. When a persistently evolutionary machine evolves, it
is the same machine that evolves, and the evolution is a part of
the entity of the machine.

However, since the user has the free will, he can effect the
future. The future is not necessary predetermined, and the user
can make lots of different futures due to his free will. For
example, let $L$ be the set of all strings that the evolutionary
machine $PT_1$, during the interaction with a user (with me),
outputs $1$ for them.  The language $L$ is not predetermined, and
it is a growing and an unfinished object (similar to choice
sequences~\ref{intu}). Consider the formula
\begin{center}$\phi:=(\exists k\in\mathbb{N})(\forall n>k)(\exists x\in
\Sigma^*)(|x|=n\wedge x\in L)$.\end{center} At each stage of time,
having evidence for the truth of formula $\phi$ conflicts with the
free will of the user, and we never could have evidence for truth
of $\phi$. It is because at each stage of time, only for a finite
number of strings in $\Sigma^*$, it is predetermined that whether
they are in $L$ or not. Let $m\in \mathbb{N}$ be such that $m$
would be grater than the length  of all strings that until now are
determined to be in $L$. The user via his free will can input all
strings with length $m+1$ to the machine $PT_1$ respectively. The
machine $PT_1$ outputs $1$ for all of them, and evolves such that
$L\cap\{x\in \Sigma^*\mid |x|=m\}$ would be empty! Also, at each
stage of time having evidence for  $\neg\phi$   conflicts with the
free will of the computist. Again, let $m\in \mathbb{N}$ be such
that $m$ would be grater than the length  of all strings that
until now are determined to be in $L$. The user via his free will
can input a string  with length $m$ to the machine $PT_1$. The
machine $PT_1$ outputs $1$ for it.

\begin{theorem}\label{pt1}
Let $L\subseteq\Sigma^*$ be the growing language intended by a
user through the persistently evolutionary Turing machine $PT_1$.
let \begin{center}$\phi:=(\exists k\in\mathbb{N})(\forall
n>k)(\exists x\in \Sigma^*)(|x|=n\wedge x\in L)$.\end{center}  We
 can never have evidence not for $\varphi$ and not for $\neg\varphi$,
if we assume that the user has the free will.
\end{theorem}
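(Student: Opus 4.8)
The plan is to reduce the theorem to two behavioral lemmas about how the machine $PT_1$ of Example~\ref{auto} evolves, and then to read the word ``evidence'' in Brouwer's weak-counterexample sense: to have evidence for a statement about the unfinished, user-dependent language $L$ means to be able to guarantee it \emph{now}, irrespective of how the computist will later exercise his free will (Definition~\ref{CE}, $c2$). Under this reading the two halves of the statement are dual, each defeated by a different free continuation of the input history.

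First I would isolate the key fact about $h$ and $f$: the underlying automaton grows (gains states, or turns a state accepting) exactly in the cases where $T_M$ outputs $1$, i.e. exactly when the input string enters $L$, whereas in Case~$2a$ the automaton is left unchanged and the output is $0$. Consequently the maximal depth of any path in the current automaton equals the length of the longest string so far placed in $L$, and by the persistent condition every verdict is permanent. From this I would extract two steering lemmas. Gap-creation~(a): if the current maximal $L$-length is below $m$ and the computist feeds in, one after another, all strings of length $m+1$, then every such string crashes and is accepted (Case~$3$), so the automaton becomes a depth-$(m+1)$ tree whose leaves are all accepting; afterwards every length-$m$ string reads to a non-accepting depth-$m$ state that reaches an accepting leaf in one symbol (Case~$2a$), hence outputs $0$, so $L\cap\{x:|x|=m\}=\emptyset$ forever. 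Fill-in~(b): if the current maximal $L$-length is below $m$, then feeding in any single fresh string $w$ of length $m$ forces a crash before depth $m$ (no path is that deep), whence $w$ is accepted (Case~$3$) and $w\in L$, so $L\cap\{x:|x|=m\}\neq\emptyset$.

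With (a) and (b) in hand the two non-existence claims follow. Suppose we had evidence for $\phi$; this would furnish a threshold $k$ together with a present guarantee that $L$ will meet every length $n>k$. Choosing $m$ larger than both $k$ and the current maximal $L$-length, lemma~(a) exhibits a continuation, permitted by the computist's free will, in which $L\cap\{x:|x|=m\}=\emptyset$ permanently; this contradicts the guarantee at $n=m>k$, so no evidence for $\phi$ can exist. Dually, suppose we had evidence for $\neg\phi$, i.e. a present guarantee that cofinally many lengths will stay unmet. Writing $M_0$ for the current maximal $L$-length, lemma~(b) applied repeatedly---input one fresh string of length $M_0+1$, then one of length $M_0+2$, and so on, each step legitimately raising the maximal $L$-length by one---yields a continuation in which every length $n>M_0$ is met, so $\phi$ is realized and the supposed guarantee for $\neg\phi$ fails. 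Hence no evidence for $\neg\phi$ can exist either.

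The main obstacle is lemma~(a): one must check that after all length-$(m+1)$ strings have been entered, no length-$m$ string slips into Case~$1$ or~$2b$ (which would output $1$). This requires verifying that the depth-$m$ states are never turned accepting by the length-$(m+1)$ inputs---only depth-$(m+1)$ leaves are---while each depth-$m$ state does acquire an outgoing edge to an accepting leaf, so that Case~$2a$ is forced; the well-definedness demanded by the persistent condition then freezes these $0$-verdicts. The remaining conceptual care lies in fixing the weak-counterexample meaning of ``evidence'' so that the two continuations built from (a) and (b) both count as admissible exercises of the free will of Definition~\ref{CE}; once that reading is adopted, the argument above is essentially forced.
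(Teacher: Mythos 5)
Your proposal is correct and follows essentially the same route as the paper: the paper's proof (which simply points back to the free-will discussion preceding the theorem) uses exactly your two steering moves, namely flooding all strings of length $m+1$ to force $L\cap\{x:|x|=m\}=\emptyset$ against $\phi$, and inserting a fresh string of a new length against $\neg\phi$. Your write-up is in fact more careful than the paper's, since it verifies the trie-structure facts (only leaves become accepting, Case~2a is then frozen by persistence) and, for $\neg\phi$, iterates the fill-in move to exhibit a full continuation realizing $\phi$ rather than refuting a single claimed gap.
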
\begin{proof} See the above argument.
\end{proof}
\section{A Persistently Evolutionary Computation
Environment}\label{PECE}

As we discussed  in Section~\ref{ETM}, we do not assume  Turing
machines as autonomous and independent entities. The meaning of a
Turing machine is subject to the interaction between the computist
(the human being), and the universal processor (human's brain, or
the nature). Therefore, if the universal processor changes then
the meanings of Turing machines change (although their syntax
remain unchanged).
 We
may start an argument by asking a question
\begin{question}
What about if the human's brain or the nature persistently evolve
in the course that a computation   is executed?
\end{question}
 In this section, we propose
a computation environment, denoted by $E_e=(U_e,O_f)$,  which its
universal processor $U_e$ persistently evolves, and its computist
$O_f$ has the free will. The computation environment
$E_e=(U_e,O_f)$ is such that $\mathrm{P}_{E_e}$ is not equal to
$\mathrm{NP}_{E_e}$ in it, for the computist who has the free
will.

\begin{definition}\label{Ee} The universal processor $U_e=(TBOX_e,SBOX_e, INST_e,
CONF_e)$ is defined as follows. \begin{itemize}

\item[] Two sets $INST_e$ and $CONF_e$ are defined to be the same
$INST_s$ and $CONF_s$ in the Turing computation environment~(see
example~\ref{tce}) respectively, and consequently the set of
procedures of the universal processor $U_e$, i.e., $\Xi_e$ is the
same $\Xi_s$.

\item[] The transition box $TBOX_e$ is also defined similar to the
transition box $TBOX_s$ of the Turing computation environment
$E_T$~(see example~\ref{tce}).

\item[] The successful box $SBOX_e$ is defined as follows: let
$C\in CONF_e$ be arbitrary
\begin{itemize}
\item if $C=(h,\underline{\triangle}x)$ then $SBOX_s(C)=YES$,

\item if $C=(h,x\underline{\triangle})$ then the $SBOX_e$ works
exactly similar to the way that the persistently evolutionary
Turing machine $PT_1$ (introduced in example~\ref{auto}) performs
on input $x$, and if $PT_1$ outputs $1$, the successful box
outputs $YES$, and

\item otherwise $SBOX_e(C)=NO$.
\end{itemize}
\end{itemize}
\end{definition}
  Note that the successful box of the universal processor $U_e$ is
a persistently evolutionary Turing machine. For the computist
$O_f$ the set of syntax-procedures (algorithms) in the environment
$E_e$ is the same set of syntax-procedures in the environment
$E_T$. However the semantics of a syntax-procedure in the
environment $E_e$ may be different from its semantics in the
environment $E_T$.

In the computation environment $E_e=(U_e,O_f)$, the
syntax-procedures in $\Xi_e$  satisfy the four criteria that
Copeland gives that any set of instruction that makes up an
algorithm should satisfy to be characterized as an
\underline{\emph{effective procedure}}  (see the Church-Turing
Thesis, Chapter 2, page~21,~\cite{kn:AS}). The only thing that one
should pay attention  is that:   the meaning of procedures may
persistently change in the environment for the computist $O_f$.
But it does not mean that they are unambiguous for the computist.
The computist does not have access to the universal processor, and
is not aware of the persistent evolution. The computist using a
pencil and paper, writes a configuration on the paper, then using
the universal processor (that could be its brain, and the
computist ignores its processes) writes the computation path  on
the paper, and what happens inside the universal processor is
ignored by the computist. Also note that

\begin{quote} the universal processor $U_e$ is a physically plausible
engine~(see~\ref{ppl}), and it could be assumed as an alternative
for the universal processor of the real world.\end{quote}

\begin{theorem}\label{prf} (\textbf{GV}). Every recursively  enumerable language can be recognized in the environment
$E_e$. That is, for every recursively enumerable language $L$,
there exists $M\in \Xi_e$ such that $L=L(M)$.
\end{theorem}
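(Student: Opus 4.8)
The plan is to reduce the statement to the already-established power of the static Turing environment and to exploit the fact that $E_e$ and $E_T$ differ only in the behaviour of the successful box, and there only on configurations of the second successful form $(h,x\underline{\triangle})$. First I would recall from Definition~\ref{Ee} that $INST_e=INST_s$, $CONF_e=CONF_s$, $\Xi_e=\Xi_s$ and $TBOX_e=TBOX_s$; consequently, for any syntax-procedure $M\in\Xi_e$ and any $x\in\Sigma^*$, the forced $TBOX_e$-computation path started at $C_{0,x}$ coincides, configuration by configuration, with the $TBOX_s$-path, hence with the classical computation path of the Turing machine $\mathcal{F}_2(M)$ (Theorem~\ref{lang}). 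Thus the only place where $E_e$ can disagree with $E_T$ about acceptance is the value of the successful box at the halting configuration.

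Next I would observe that $SBOX_e$ and $SBOX_s$ agree (both return $YES$, unconditionally) on every configuration of the first successful form $(h,\underline{\triangle}x)$, that $SBOX_e$ invokes the evolving machine $PT_1$ \emph{only} on configurations of the second form $(h,x\underline{\triangle})$, and that on every non-halting configuration both boxes return $NO$ directly, without consulting $PT_1$. So the strategy is to realise $L$ by a syntax-procedure whose accepting computations never reach a configuration of the second form. Concretely, given a recursively enumerable $L$, I would take a Turing machine $T_0$ with $L(T_0)=L$ (Definition~\ref{turingd}) and normalise it to a machine $T$ that simulates $T_0$ and, just before entering the halting state, sweeps its head leftwards past all non-blank symbols until it rests on a blank cell with only blanks to its left, and only then halts. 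This normalisation does not change the accepted set, so $L(T)=L$, while every halting configuration of $T$ now has the first form $(h,\underline{\triangle}w)$.

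Finally I would set $M=\mathcal{F}_1(T)\in\Xi_e$ and verify $L(M)=L$ in $E_e$. For $x\in L$ the $TBOX_e$-path of $M$ on $x$ is exactly the computation of $T$ on $x$; it terminates at a configuration $C_n=(h,\underline{\triangle}w)$ with $\upsilon(M,C_n)=\bot$ (there are no instructions out of $h$), and $SBOX_e(C_n)=YES$ by the first clause of Definition~\ref{Ee}; hence, by clause $c3$, the computist can construct a successful computation path and $x\in L(M)$. For $x\notin L$ the machine $T$ never reaches the halting state, so there is no finite successful path and $x\notin L(M)$. Because $M$ never produces a halting configuration of the second form, the persistently evolutionary part $PT_1$ of $SBOX_e$ is never consulted along these computations, so the conclusion is independent of whatever evolution $PT_1$ has undergone. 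This is precisely the point that demands the god's view (\textbf{GV}): only an observer with access to the inner structure of $U_e$ can certify that $SBOX_e$ coincides with $SBOX_s$ on the configurations that actually occur, and that the evolution is thereby bypassed.

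The main obstacle I anticipate is purely bookkeeping: making the left-sweep normalisation precise within the configuration conventions of Definition~\ref{turingd}, so that the halt genuinely occurs in the form $(h,\underline{\triangle}w)$ rather than accidentally in the form $(h,w\underline{\triangle})$, and so that the added sweep preserves both the accepted language and the determination condition on $M$. Everything else follows from the identity $TBOX_e=TBOX_s$ together with Theorem~\ref{lang}.
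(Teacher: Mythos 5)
Your proposal is correct and follows essentially the same route as the paper's (very terse) proof: the paper likewise constructs, for each Turing machine $T$, a syntax-procedure $M$ whose computations can never produce a configuration of the form $(h,y\underline{\triangle})$, so that $SBOX_e$ never consults $PT_1$ and $U_e$ behaves exactly like $U_s$ on $M$. Your left-sweep normalisation and the verification via $TBOX_e=TBOX_s$ and Theorem~\ref{lang} simply supply the details the paper dismisses as ``straightforward.''
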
\begin{proof}
 It is straightforward. For each Turing machine $T$, one may
 construct a syntax-procedure  $M\in\Xi_e$ such that $L(T)=L(M)$, and it is not possible for
 the computist to obtain any configurations
 $(h,y\underline{\triangle})$, $y\in\Sigma^*$, using the procedure $M$. Then the
 universal processor $U_e$ behaves exactly similar to the
 universal processor $U_s$ for the procedure $M$.
\end{proof}

\begin{corollary} (\textbf{GV}). The computation environment $E_e=(U_e,O_f)$ is \emph{enough
powerful} (see definition~\ref{enpo}).
\end{corollary}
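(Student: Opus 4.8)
The plan is to observe that this corollary is an immediate consequence of Theorem~\ref{prf}, so almost no new work is required. First I would unfold Definition~\ref{enpo}: the environment $E_e$ is \emph{enough powerful} precisely when, for every recursively enumerable language $L$, there exists a syntax-procedure $M$ in $E_e$ such that the language defined by the computist through $M$ satisfies $L(M)=L$. This is word-for-word the conclusion established by Theorem~\ref{prf}, which asserts that for every recursively enumerable $L$ there is some $M\in\Xi_e$ with $L=L(M)$.

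The one point worth verifying is that the two occurrences of the symbol $L(M)$ refer to the same object. In Definition~\ref{enpo} the language $L(M)$ is the computist-dependent language of condition~$c3$, namely the set of strings for which the computist $O_f$ can construct a successful computation path of $M$; and this is exactly the meaning carried by $L(M)$ in the statement and proof of Theorem~\ref{prf}. Hence the quantifier structure matches and no reformulation is needed. It is worth recalling that the proof of Theorem~\ref{prf} selects $M$ so that no configuration of the form $(h,y\underline{\triangle})$ is ever reached, whence the evolutionary part of $SBOX_e$ never acts and $U_e$ behaves like the static $U_s$; this is what guarantees that $L(M)$ is well-defined and stable, and so is already built into the witness we quote.

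I would therefore simply invoke Theorem~\ref{prf}: given an arbitrary recursively enumerable $L$, its proof furnishes a syntax-procedure $M\in\Xi_e$ with $L=L(M)$, which is exactly the witness demanded by Definition~\ref{enpo}. Since Theorem~\ref{prf} is proved from the god's view, the same is true of this corollary, which justifies the \textbf{GV} tag. There is no genuine obstacle here: the only content beyond the theorem is the trivial observation that the defining condition for being enough powerful coincides with what the theorem already proved.
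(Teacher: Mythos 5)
Your proposal is correct and takes essentially the same route as the paper: the paper offers no separate argument for this corollary, treating it as an immediate consequence of Theorem~\ref{prf}, whose conclusion is word-for-word the condition required by Definition~\ref{enpo}, which is exactly the deduction you make. Your additional checks (that $L(M)$ means the same computist-defined language in both places, and that the \textbf{GV} tag is inherited from the theorem) are sound but add nothing beyond what the paper leaves implicit.
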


\begin{proposition}\label{pee} (\textbf{GV}). The complexity class $\mathrm{P}$ is a subset of
$\mathrm{P}_{E_e}$.
\end{proposition}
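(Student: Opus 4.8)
The plan is to leverage the fact, already exploited in the proof of Theorem~\ref{prf}, that $U_e$ and $U_s$ differ only in the treatment of halting configurations of the \emph{right-blank} form $(h,x\underline{\triangle})$: on these the successful box consults the persistently evolutionary machine $PT_1$, whereas on \emph{left-blank} halting configurations $(h,\underline{\triangle}x)$ it returns $YES$ unconditionally, exactly as $SBOX_s$ does. Consequently, any syntax-procedure $M$ whose computations can only terminate in left-blank configurations is run by the evolving processor $U_e$ in precisely the same way as by the static processor $U_s$ of the Turing computation environment $E_T$, and so inherits its language and its time complexity from an ordinary Turing computation.

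First I would take an arbitrary $L\in\mathrm{P}$ and, using Theorem~\ref{lang} together with the fact that $\mathrm{P}_{E_T}=\mathrm{P}$, fix a Turing machine $T$ with $L(T)=L$ whose running time is bounded by a polynomial $p$. I would then transform $T$ into a machine $T'$ that simulates $T$ and, upon acceptance, executes a clean-up phase returning the head to the leftmost cell and entering $h$ there, so that every \emph{accepting} halting configuration of $T'$ has the form $(h,\underline{\triangle}z)$, while rejection leads to a stuck, non-successful configuration; on inputs outside $L$ no successful configuration is ever produced. Since the portion of tape visited is bounded by the running time, the clean-up costs only polynomially many additional steps, so $T'$ still decides $L$ within some polynomial $p'$. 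Setting $M=\mathcal{F}_1(T')\in\Xi_s=\Xi_e$, the determinism of $M$ makes the computist's computation path of $M$ on $x$ unique and identical to the Turing computation of $T'$ on $x$: every transition is governed by $TBOX_e=TBOX_s$, and the only evaluation of $SBOX_e$ falls on the terminal left-blank configuration, where it returns $YES$ with $\upsilon(M,C_n)=\bot$. Hence $L(M)=L$ and $time_M(x)=time_{T'}(x)<p'(|x|)$ for every $x\in L$, which by $c9$ and $c10$ yields $L\in\mathrm{P}_{E_e}$.

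The step demanding the most care --- and the genuine obstacle --- is to verify that neither the \emph{free will} of the computist $O_f$ nor the persistent evolution of $PT_1$ can disturb this coincidence with $U_s$. During his activity the computist may feed arbitrary strings to $SBOX_e$ on right-blank configurations and thereby cause $PT_1$ to evolve in an unpredictable manner; but the acceptance and the time count of $M$ depend only on evaluations of $SBOX_e$ at left-blank configurations, and on those the output is $YES$ regardless of the current internal state of $PT_1$. Thus $L(M)$ and $time_M$ are completely insensitive to the evolution, and the reduction to the static Turing computation of $T'$ is robust. I expect the remaining work --- exhibiting the clean-up phase explicitly and confirming its polynomial cost --- to be routine, of the same ``straightforward'' character as the construction already used in Theorem~\ref{prf}.
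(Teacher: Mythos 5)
Your proposal is correct and is essentially the argument the paper intends: the paper's proof is only the remark ``It is straightforward,'' but its proof of Theorem~\ref{prf} spells out exactly your device --- arrange the syntax-procedure so that no right-blank halting configuration $(h,y\underline{\triangle})$ can ever arise, whence $U_e$ behaves identically to the static $U_s$ --- and you merely add the (routine) check that the left-blank clean-up phase costs only polynomial overhead, which is what the $\mathrm{P}$ case needs beyond the r.e.\ case.
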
\begin{proof}
It is straightforward.\end{proof}

The converse of   Theorems~\ref{prf}, and~\ref{pee} do not hold
true. The computist can effectively define some languages in
polynomial time which are unfinished and not predetermined (see
the proof of theorem~\ref{subex}).

\begin{definition}
Let $\phi$ be a statement about the syntax-procedures  in $\Xi_e$.
We say the truth of $\phi$ is consistent with the \underline{free
will} of the computist, whenever independent of any way that the
computist interacts with the universal processor, the statement
$\phi$ always holds true in the environment.
\end{definition}

Any formula  that forces the future to be predetermined could
conflict  with the free will of the computist (also
see~\ref{FWL}). In following, we prove that
$\mathrm{P}_{E_e}=\mathrm{NP}_{E_e}$ conflicts with the free will
of the computist.

 The item c2 of
definition~\ref{CE}, considerers the free will as a characteristic
of any computist who lives in  a computation environment,
therefore, if a formula conflicts with the free will, then from
the god's view, the formula would not be valid for the computation
environment.

\begin{definition}
We say a function $f:\mathbb{N}\rightarrow \mathbb{N}$ is
sub-exponential, whenever there exists $t\in \mathbb{N}$ such that
for all $n>t$, $f(n)<2^n$.
\end{definition}

\begin{theorem}\label{subex} \textbf{(GV)}. There exists a syntax-procedure $M\in \Xi_e$ such that

\begin{itemize}
\item the language $L(M)$ that the computist $O_f$ effectively
defines (effectively intends) through $M$ belongs to the class
$P_{E_e}$, and

\item there exists no syntax-procedure $M'\in \Xi_e$, such that
the language that the computist defines through $M'$, i.e.,
$L(M')$, is equal to $L'= \{x\in \Sigma^*\mid\exists y
(|y|=|x|\wedge y\in L(M))\}$, and for some $k\in \mathbb{N}$, for
all $x\in L(M')$, if $|x|>k$ then
\begin{center}
$time_M(x)\leq f(|x|)$
\end{center}
where $f:\mathbb{N}\rightarrow \mathrm{N}$ is a sub-exponential
function.

\end{itemize}
\end{theorem}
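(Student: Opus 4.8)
The plan is to take for $M\in\Xi_e$ the syntax-procedure that, started from $C_{0,x}=(q_0,\underline{\triangle}x)$, simply scans $x$ to the right and halts in the configuration $(h,x\underline{\triangle})$, so that the only informative call to $SBOX_e$ is the one that consults the persistently evolutionary machine $PT_1$ on $x$ (Definition~\ref{Ee}). With this choice $L(M)$ is exactly the set of words on which $PT_1$, in its current evolved state, returns $1$. First I would dispatch the easy half: the unique path from $C_{0,x}$ to $(h,x\underline{\triangle})$ has length $O(|x|)$, and the work $PT_1$ does inside the box is an internal action that the computist does not count (clause $c1$) and is polynomial anyway by Proposition~\ref{feas}; hence $time_M(x)$ is polynomial and $L(M)\in\mathrm{P}_{E_e}$.

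For the negative half I would argue by contradiction with the free will of the computist (clause $c2$). Suppose some $M'\in\Xi_e$ has $L(M')=L'$ with $time_{M'}(x)\le f(|x|)<2^{|x|}$ for all long $x\in L(M')$, where $L'=\{x\mid\exists y(|y|=|x|\wedge y\in L(M))\}$. Here one should note that the time bound constrains only the \emph{accepting} runs, and that $M'$ may also accept by reaching an automatically-successful left-halting configuration $(h,\underline{\triangle}z)$; so an accepting run of $M'$ need not exhibit a length-$|x|$ witness. Fix a large $n$ and an input $x$ with $|x|=n$. Since the successful path of $M'$ on $x$ has fewer than $2^{n}$ configurations, it applies $SBOX_e$ to fewer than $2^{n}$ configurations of the form $(h,y\underline{\triangle})$; in particular there is a length-$n$ word $y^{*}$ that $M'$ never submits to $PT_1$ during this computation. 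Let $S$ (with $|S|<2^{n}$ and $y^{*}\notin S$) be the set of length-$n$ words that $M'$ does query.

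The heart of the proof is to realize two admissible exercises of free will that $M'$ cannot tell apart on $x$ yet that disagree about $x\in L'$. Using the evolution mechanism of Example~\ref{auto} and the Free Will device behind Theorem~\ref{pt1}, in history $A$ the computist first feeds $y^{*}$, which crashes the current automaton and accepts it, so $y^{*}\in L(M)$ and $x\in L'$; he then feeds one length-$(n{+}1)$ extension of every other length-$n$ word $w\neq y^{*}$, giving each such $w$ an accepting child and thereby forcing $w$ into case~2 of $h(\cdot,\cdot)$, where $PT_1$ returns $0$. In history $B$ he instead feeds an extension of \emph{every} length-$n$ word, so by the Free Will argument no length-$n$ word lies in $L(M)$ and $x\notin L'$. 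In both histories every word of $S$ is answered $0$, and by persistence (Fact~\ref{fact}) these recorded answers are stable; since $M'$ is deterministic and queries only $S$ at level $n$, one would like to conclude that its whole computation, hence its verdict, coincides in $A$ and $B$, giving a wrong answer in one of them and contradicting $L(M')=L'$ for the free-willed computist.

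The step I expect to be the real obstacle is exactly this last claim of indistinguishability against an \emph{adaptive} $M'$. The two tries differ at the node $y^{*}$ (accepting in $A$, non-accepting in $B$), and this discrepancy leaks to ancestors: the length-$(n{-}1)$ prefix $p$ of $y^{*}$ acquires an accepting child in $A$ but not in $B$, so a query to $p$ falls into case~2a under $A$ and case~2b under $B$ and returns different values; worse, such leakage can cascade further up the trie once $M'$ queries a node on the $y^{*}$-branch. Because the queries $M'$ issues may depend on earlier answers, confining the entire difference between $A$ and $B$ to strings that $M'$ provably never touches is delicate, and a uniform-random choice of $y^{*}$ need not help, since a few short queries can be ancestors of almost all length-$n$ words. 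The levers to push this through are persistence, the monotone acceptance-only character of the evolution, and the god's ability (this is a \textbf{GV} statement) to inspect $M'$'s deterministic behaviour in advance and tailor $y^{*}$ and the saturation pattern of the $y^{*}$-branch accordingly; making that tailoring robust against the adaptivity, and against the auto-accepting left-halting escape noted above, is where the genuine work lies.
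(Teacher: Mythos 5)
Your choice of $M$ (the right-scanning procedure whose only effective $SBOX_e$ call hands $x$ to $PT_1$) and your treatment of the positive half coincide with the paper's proof and are fine. The negative half is where you and the paper part ways, and your version has a genuine gap that you yourself flag: the indistinguishability of the two histories $A$ and $B$ against $M'$ is never established, and as you set things up it is actually false. Your worry about the length-$(n-1)$ prefix of $y^{*}$ is correct (it is answered $NO$ in $A$ by case~2 with an accepting child, but $YES$ in $B$ where it gains the accepting status itself), and the damage is worse than you say: a queried \emph{extension} of $y^{*}$ also distinguishes the histories, because the poison node $y^{*}b$ is an accepting leaf in $B$ but is created as a non-accepting interior node in $A$; for instance querying $y^{*}bc$ and then $y^{*}b$ yields $(YES,NO)$ in $A$ but $(YES,YES)$ in $B$. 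So the heart of the theorem is left unproven. (A repair exists: since $|H(x)|<f(n)<2^{n}$, one can pick $y^{*}$ so that no queried string equals $y^{*}$, extends $y^{*}$, or is the immediate parent of $y^{*}$, and then show by induction on the query sequence that the difference between the two tries stays confined to the subtree at $y^{*}$; but you did not carry this out, and your pessimism about short ancestor queries is misplaced --- queries on prefixes of length $<n-1$ are answered identically in both histories.)

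You also misdiagnose the obstacle as adaptivity, and this is exactly the lever the paper pulls: $M'$ \emph{cannot} adapt in this model. A computation path under clause c3 is generated by $\upsilon$ and $TBOX_e$ alone, and $TBOX_e$ is static; the outputs of $SBOX_e$ never influence which configuration comes next. Hence the set $H(w)$ of strings submitted to $PT_1$ during the run of $M'$ on $w$ is fixed in advance, independent of the interaction history (the paper says this explicitly when it introduces $path(y)$). This lets the paper dispense with two histories altogether and argue post hoc in a single history: the computist first feeds $v0$ to $M$ for every $v\in E(w)\cup D(w)$ (the queried strings of lengths $|w|$ and $|w|+2$), which renders every $PT_1$ query occurring in the run of $M'$ on $w$ inert (answer $NO$, no evolution); he then runs $M'$ on $w$ once and falsifies whichever verdict it returns. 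If $M'$ says $w\in L(M')$, he feeds $v0$ for all $v$ with $|v|=|w|$, after which no string of length $|w|$ can ever enter $L(M)$, so $w\notin L'$; if $M'$ says $w\notin L(M')$, then since fewer than $2^{|w|}$ strings of length $|w|$ have had a successor touched, he picks an untouched $z$, feeds it, obtains $z\in L(M)$, hence $w\in L'$. Either way $L(M')\neq L'$, with free will (clause c2) invoked only to guarantee these feedings are available. That one-history, poison-then-falsify argument is the missing idea in your proposal.
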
\begin{proof}
 Consider the following syntax-procedure $M\in \Xi_e$
\begin{itemize} \item[] $\Sigma=\{0,1\},\Gamma=\{0,1,\triangle\},$

\item[]$M=\{[(q_0,\triangle)\rightarrow(h,\triangle,R)],
[(h,0)\rightarrow(h,0,R)],[(h,1)\rightarrow(h,1,R)]\}$.
\end{itemize} The intension of the word $M$ is persistently
evolutionary in the computation environment $(U_e,O_f)$ for the
computist $O_f$. That is, as the computist  chooses a string $x\in
\Sigma^*$ to check whether $x$ is an element of   $L(M)$, the
\emph{intension of $M$} persistently changes in the environment,
though since  the computist  does not have access to the inner
structure of the universal processor $U_e$, he never gets aware of
changes.
\begin{quote}
The language $L(M)$ is intended \emph{deterministically} through
the syntax-procedure  $M$ by the computist  $O_f$, but it is not a
\emph{predetermined}  language.
\end{quote}

 It is obvious that the language that the computist  $O_f$ intends by the
syntax-procedure  $M$, i.e., $L(M)$ belongs to $\mathrm{P}_{E_e}$
(due to definition of time complexity in definition~\ref{CE}).

 Let $L'=\{x\in \Sigma^*\mid\exists y
(|y|=|x|\wedge y\in L(M))\}$. It is again obvious that $L'$
belongs to $\mathrm{NP}_{E_e}$.

Suppose there  exists  a syntax-procedure $M'\in \Xi_e$ that the
computist can intends $L'$ by $M'$ in time complexity less than a
sub-exponential function $f$. Then for some $k\in \mathbb{N}$, for
all $x$ with length greater than $k$, $x$ belongs to $L'$
whenever\begin{itemize} \item[] the computist constructs  a
successful computation path $C_{0,x}C_{1,x},...,C_{n,x}$ of the
syntax-procedure $M'$ on $x$, for some $n\leq
f(|x|)$.\end{itemize}

 Let $m_1\in \mathbb{N}$ be the maximum length of those
strings $y\in \Sigma^*$ that   \underline{until now} are accepted
by the persistently evolutionary Turing machine $PT_1$  available
in the successful box of $U_e$. Define $m=\max(m_1,k)$.

For every $y\in \Sigma^*$, let
$path(y):=C_{0,y}C_{1,y},...,C_{f(|y|),y}$ be the computational
path of syntax-procedure $M'$ on the string $y$. The $path(y)$ can
be generated by the transition box of $U_e$. Note that $TBOX_e$ is
a static machine and does not evolve, thus $path(y)$ is
independent of the behavior of the computist. Let
\begin{center}
$S(w)=\{C_{j,y}\mid C_{j,y}\in path(y) \wedge \exists x\in
\Sigma^*(C_{i,y}=(h,x\underline{\triangle} ))\}$
\end{center}and
\begin{center}
$H(y)=\{x\in \Sigma^*\mid \exists C_{j,y}\in path(y)
(C_{j,y}=(h,x\underline{\triangle} ))\}$
\end{center}
We refer by $|H(y)|$ to the number of elements of $H(y)$, we have
$|H(y)|\leq f(|y|)$ if $|y|>k$. Also let $E(y)=H(y)\cap \{x\in
\Sigma^*\mid |x|=|y|\}$, and $D(y)=H(y)\cap \{x\in \Sigma^*\mid
|x|=|y|+2\}$.

Consider $w\in \Sigma^*$ with $|w|>m$. Two cases are possible:
either $S(w)=\emptyset$ or it is inhabited.

\underline{Consider the first case}. $S(w)=\emptyset$.

The computist chooses a string $w\in \Sigma^*$ with length greater
than $m$ and applies the universal processor $U_e$ on the
syntax-procedure $M'$ and the string $w$, and recognizes whether
$w$ is in $L(M')$ or not. since the set $S(w)$ is empty, the
execution of $M'$ on $w$ does not make the $SBOX_e$ to evolve, and
it remains unchanged.

If the computist realizes that $w\in L(M')$ then it means that
there exists a string $v\in \Sigma^*$ such that $|v|=|w|$ and
$v\in L(M)$. But length $v$ is greater than $m$, and it
contradicts with free will of the computist  that he can apply the
universal processor in any arbitrary orders on instructions  and
configurations (see c2 of definition~\ref{CE}). Since the
computist  has the free will, he can apply the universal processor
$U_e$ on the syntax-procedure $M$ and all strings in $\Sigma^*$
with length $|v|+1$ sequentially. As the length of $v$ is greater
than $m$, all strings with length $|v|+1$ are accepted by the
persistently evolutionary Turing machine $PT_1$ (see item-3 of
example~\ref{auto}) available in $SBOX_e$,  and the universal
processor evolves. Now if the computist  applies the universal
processor  on the syntax-procedure $M$ and the string $v$, then
$v$ cannot be an element of    $L(M)$ (see the item-2 of
example~\ref{auto}) and $v\not\in L(M)$, contradiction.

If the computist  realizes that $w\not\in L(M')$ then it means
that for all strings $v\in \Sigma^*$, $|v|=|w|$, we have $v\not\in
L(M)$. But it contradicts with  the free will of the computist
again. As the length $w$ is greater than $m$, the computist  may
choose a string $z$ with $|z|=|w|$ and by the item-3 of
example~\ref{auto}, we have $z\in L(M)$, contradiction.

\underline{Consider the second case}. $S(w)\neq\emptyset$.

 Suppose that before computing $M'$ on $w$,  the computist applies the universal processor on the
syntax-procedure  $M$ and all strings $v0$'s, for  $v\in E(w)$,
and then applies the universal processor on the syntax-procedure
$M$ and strings $v0$'s, for $v\in D(w)$ respectively. Since
$|w|>m$, the computist would have $v0\in L(M)$ for all $v\in
E(w)\cup D(w)$, and the successful box of the  universal processor
$U_e$ evolves through computing $M$ on $v0$'s.

 After that, the computist  applies the universal
processor on $M'$ and $w$. The universal processor already evolved
in the way that the successful box outputs $No$ for all
configuration in $\{C_{i,w}\in S(w)\mid \exists x\in E(w)\cup
D(w)(C_{i,w}=(h,x\underline{\triangle}))\}$. Either the computist
finds $w\in L(M')$ or $w\not\in L(M')$.
 Suppose the first case
happens and $w\in L(M')$. It contradicts with the free will of the
computist. The computist can apply the universal processor on $M$
and strings $v0$, $|v|=|w|$ sequentially, and would make $\{v0\in
\Sigma^*\mid |v|=|w|\}\subseteq L(M)$. Then the successful box
evolves in the way that, it will output $No$ for all
configurations $(h,v\underline{\triangle})$, $|v|=|w|$, and thus
there would exist no $v\in L(M)\cap\{x\in \Sigma^*\mid |x|=|w|\}$.

Suppose the second case happens and $w\not\in L(M')$. Since
$|H(w)|<f(|w|)< 2^{|w|}$, during the computation of $M'$ on $w$,
only $f(|w|)$ numbers of configurations  of the form
$(h,x\underline{\triangle})$, $x\in\{v0\mid |v|=|w|\}\cup \{v1\mid
|v|=|w|\}$ are given as input to the successful box. Therefore
there exists a string $z\in \{x\in \Sigma^*\mid |x|=|w|\}$ such
that none of its successors have been input to the persistently
evolutionary Turing machine $PT_1$, and if the computist chooses
$z$ and computes $M$ on it, then $z\in L(M)$. Contradiction.
\end{proof}

Note that the proof of Theorem~\ref{subex} cannot be carried by
the computist $O_f$ who lives in the environment. The above proof
is done by the god, the ideal mathematician who has   access to
the inner structure of the universal processor.
\begin{theorem}\label{freew} \textbf{(GV)}.~{\footnotesize{\emph{Free Will}}}~\footnote{Readers may see~\cite{kn:parad},
where the notion of  the \underline{free will} is used to settle
the well-known Surprise Exam Paradox. The notion of the
\underline{free will} is also discussed in author's PhD
thesis~\cite{kn:phdr}, and in~\cite{kn:deci}.}.\begin{center}
$\mathrm{NP}_{E_e}\neq\mathrm{P}_{E_e}$.\end{center}
\end{theorem}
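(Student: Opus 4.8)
The plan is to read the separation off directly from Theorem~\ref{subex}, which already performs all of the combinatorial and free-will reasoning; the present statement is essentially its specialization from sub-exponential to polynomial bounds. Theorem~\ref{subex} supplies a concrete syntax-procedure $M \in \Xi_e$ together with the language $L' = \{x \in \Sigma^* \mid \exists y\,(|y| = |x| \wedge y \in L(M))\}$, and records that $L(M) \in \mathrm{P}_{E_e}$ while $L' \in \mathrm{NP}_{E_e}$. My intention is to exhibit this $L'$ as an explicit witness in $\mathrm{NP}_{E_e} \setminus \mathrm{P}_{E_e}$, from which $\mathrm{NP}_{E_e} \neq \mathrm{P}_{E_e}$ is immediate.

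First I would re-confirm $L' \in \mathrm{NP}_{E_e}$ using item c10 of Definition~\ref{CE}: taking $J = \{(x,y) \mid |y| = |x| \wedge y \in L(M)\}$ and the polynomial $q(n) = n$, the equivalence $x \in L' \Leftrightarrow \exists y\,(|y| \leq q(|x|) \wedge (x,y) \in J)$ holds by construction, and $J \in \mathrm{P}_{E_e}$ because $L(M) \in \mathrm{P}_{E_e}$ and the length test $|y| = |x|$ is polynomially checkable. This is the routine half, and it is already asserted inside the proof of Theorem~\ref{subex}.

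Next I would argue $L' \notin \mathrm{P}_{E_e}$. Every polynomial function is sub-exponential in the sense of the definition preceding Theorem~\ref{subex} (eventually $f(n) < 2^n$), so membership of $L'$ in $\mathrm{P}_{E_e}$ would produce a syntax-procedure $M' \in \Xi_e$ with $L(M') = L'$ and $time_{M'}(x) \leq f(|x|)$ for a polynomial, hence sub-exponential, $f$ and all sufficiently long $x$. This is precisely the configuration that Theorem~\ref{subex} forbids: no such $M'$ can exist without contradicting the free will of the computist (item c2 of Definition~\ref{CE}). Combining the two halves yields $L' \in \mathrm{NP}_{E_e}$ and $L' \notin \mathrm{P}_{E_e}$, so the two classes differ.

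The one point deserving care --- and where any residual subtlety lives --- is the reduction of \emph{polynomial time} to \emph{sub-exponential time}. The genuine work has already been discharged in Theorem~\ref{subex}, whose argument relies on the fact that a computation path of length at most $f(|w|) < 2^{|w|}$ can generate fewer than $2^{|w|}$ configurations of the form $(h, x\underline{\triangle})$, leaving some string $z$ of length $|w|$ whose successors have not yet been presented to $PT_1$; the free will of the computist then forces a contradiction whichever membership verdict $M'$ returns. I would therefore only need to verify that the polynomial bound implicit in $\mathrm{P}_{E_e}$ genuinely falls under this sub-exponential umbrella, so that no new evolution or free-will argument must be reconstructed here --- the separation of $\mathrm{NP}_{E_e}$ from $\mathrm{P}_{E_e}$ is then an immediate corollary of the already-proved sub-exponential lower bound.
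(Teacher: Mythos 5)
Your proposal is correct and follows exactly the paper's route: the paper proves this theorem in one line as a consequence of Theorem~\ref{subex}, and your write-up simply spells out that consequence (polynomial bounds are sub-exponential, so $L'$ witnesses $\mathrm{NP}_{E_e}\setminus\mathrm{P}_{E_e}$), which is the intended reading. No gap; the detail you add about $L'\in\mathrm{NP}_{E_e}$ and the polynomial-to-sub-exponential reduction is precisely what the paper leaves implicit.
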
\begin{proof} It is a consequence of
theorem~\ref{subex}.
\end{proof}

The above theorem says that there exists a syntax-procedure  $M\in
\Xi_e$ that the computist  $O_f$ can effectively define   the
language $L(M)$ in polynomial time through the syntax-procedure
$M$, such that for $L'=\{x\in \Sigma^*\mid\exists y (|y|=|x|\wedge
y\in L(M))\}$, if $L'$ can be effectively defined through a
syntax-procedure $M'\in \Xi_e$ then it forces the computist  to
use the universal processor in some certain orders, which
conflicts with the free will of the computist~\footnote{Note that
the above argument says that if  the language of a
syntax-procedure $M$ is not predetermined and the computist  who
computes the language through $M$ has free will, then future is
always undetermined and any assumption that forces the future to
be determined conflicts with the free will. The statement
$\mathrm{P=NP}$ is one of those assumptions that forces the future
behavior of the computist to be determined.}.

\begin{corollary} The computation environment $E_e=(U_e,O_f)$
is a mathematical model for the Copeland's criteria of effective
procedures. Thus, \begin{center} one cannot derive
$\mathrm{P}=\mathrm{NP}$ from the criteria introduced by Copeland
for effective procedures.\end{center}
\end{corollary}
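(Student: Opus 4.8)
The plan is to read the corollary as a soundness argument: a statement provable from a collection of axioms must hold in every structure satisfying those axioms, so exhibiting one structure that satisfies Copeland's criteria yet falsifies $\mathrm{P}=\mathrm{NP}$ suffices to show that $\mathrm{P}=\mathrm{NP}$ is not among the consequences of those criteria. The corollary therefore splits into two tasks: (i) verify that $E_e$ is genuinely a model of Copeland's four criteria, and (ii) invoke this model-theoretic principle together with Theorem~\ref{freew} to conclude non-derivability.

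First I would establish (i) by checking each of Copeland's criteria against the environment $E_e=(U_e,O_f)$, as already sketched in the discussion preceding Theorem~\ref{subex}. Criterion (1), that each instruction be a finite string, together with the auxiliary requirement that configurations be finitely describable, holds because $INST_e=INST_s$ and $CONF_e=CONF_s$ consist exactly of the finite strings of the standard Turing syntax (items a2, a3 of Definition~\ref{pypl}). Criterion (2) holds because every instruction is carried out in a single application of $TBOX_e$. Criterion (3) holds because the computist $O_f$ only observes the input--output behaviour of the two boxes and ignores their internal actions (clauses c1 and c8 of Definition~\ref{CE}). Criterion (4), that no insight be demanded, holds because $O_f$ merely feeds an instruction and a configuration to $TBOX_e$ and reads off the successor. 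The one point needing care is definiteness: although the intension of a syntax-procedure in $\Xi_e$ persistently evolves, clause c8 and Fact~\ref{fact} guarantee that the boxes behave well-defined through time and that the evolution is undetectable to the computist, so each instruction remains unambiguous from the computist's standpoint. Hence $E_e$ satisfies all of Copeland's requirements, and by the physical-plausibility discussion of Section~\ref{ppl} it is moreover a physically plausible alternative for the real-world environment.

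Next I would carry out (ii). Regard Copeland's criteria as a set of axioms for what counts as computation by effective procedure; any computation environment meeting Definition~\ref{pypl} and the four criteria is then a model of these axioms. By part (i), $E_e$ is such a model. By Theorem~\ref{freew}, read at the god's view (the appropriate semantic level, since by clause c2 the free will is a definitional feature of every computist and, as remarked before Theorem~\ref{subex}, a formula conflicting with the free will is not valid in the environment), the statement $\mathrm{P}=\mathrm{NP}$ fails in $E_e$, for we have $\mathrm{P}_{E_e}\neq\mathrm{NP}_{E_e}$. Were $\mathrm{P}=\mathrm{NP}$ derivable from the criteria alone, soundness would force it to hold in every model of the criteria, in particular in $E_e$, contradicting Theorem~\ref{freew}. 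Therefore $\mathrm{P}=\mathrm{NP}$ cannot be derived from the criteria introduced by Copeland for effective procedures.

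I expect the main obstacle to be justifying that $E_e$ really is admitted as a model of the criteria, rather than being excluded on the grounds that Copeland tacitly intended a static Turing processor. The crux is that Copeland's four criteria constrain only the computist's interface --- the finiteness, basicness, one-step execution, and insight-freedom of individual instructions --- and say nothing about whether the processor behind the two boxes is static or persistently evolutionary. Because persistent evolution is invisible at the observer level (Fact~\ref{fact} and its following corollary), the evolving $E_e$ and the static $E_T$ are both legitimate models of the very same criteria, yet they disagree on $\mathrm{P}$ versus $\mathrm{NP}$; it is precisely this neutrality of the criteria with respect to the inner dynamics of the universal processor that makes the non-derivability argument go through.
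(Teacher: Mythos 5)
Your proposal is correct and follows essentially the same route as the paper: the paper's (implicit) justification is precisely that the discussion after Definition~\ref{Ee} verifies the syntax-procedures of $\Xi_e$ satisfy Copeland's four criteria (with well-definedness of the evolving boxes, clause c8 and Fact~\ref{fact}, securing definiteness from the computist's standpoint), and then Theorem~\ref{freew} exhibits $\mathrm{P}_{E_e}\neq\mathrm{NP}_{E_e}$, so by the standard soundness principle $\mathrm{P}=\mathrm{NP}$ cannot be a consequence of the criteria. Your closing remark on the criteria's neutrality about the processor's inner dynamics is exactly the point the paper makes via Fact~\ref{fact} and its corollary, so nothing is missing.
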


\begin{corollary} There exists a \emph{physically plausible} and \emph{enough powerful} computation
environment which $\mathrm{P}$ is not equal to $\mathrm{NP}$ in.
\end{corollary}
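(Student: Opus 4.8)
The plan is to exhibit a single witness environment and then verify the three required properties one at a time, since each property has already been isolated as an earlier result. The natural candidate is the persistently evolutionary computation environment $E_e=(U_e,O_f)$ of Definition~\ref{Ee}. I would claim it satisfies \emph{physical plausibility}, \emph{enough power}, and the separation $\mathrm{P}_{E_e}\neq\mathrm{NP}_{E_e}$ simultaneously, and that assembling these three facts immediately yields the corollary.

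First I would dispatch physical plausibility. By Definition~\ref{Ee} the processor $U_e$ reuses the instruction and configuration sets of the Turing environment, that is $INST_e=INST_s$ and $CONF_e=CONF_s$, so conditions a2 and a3 of Definition~\ref{pypl} hold because these objects are describable by finite strings, exactly as recorded in the corollary following that definition. For condition a1 I would note that $TBOX_e=TBOX_s$ is the ordinary Turing transition, which is manifestly plausible, and that $SBOX_e$ merely runs the persistently evolutionary Turing machine $PT_1$, whose physical plausibility is argued in Subsection~\ref{ppl}. Hence both boxes have physically plausible inner structure and $E_e$ qualifies as physically plausible. Second, enough power is precisely the corollary following Theorem~\ref{prf}: every recursively enumerable language is realized by some $M\in\Xi_e$ on which the computist never reaches a right-blank halting configuration, so $U_e$ behaves exactly like the static $U_s$ on that procedure. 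Third, the separation $\mathrm{P}_{E_e}\neq\mathrm{NP}_{E_e}$ is exactly Theorem~\ref{freew}. Combining the three, $E_e$ is a physically plausible, enough powerful environment in which $\mathrm{P}$ differs from $\mathrm{NP}$.

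The real weight of the argument does not sit in this assembly but one level down, in Theorem~\ref{freew} (equivalently Theorem~\ref{subex}), and that is where I would expect the main obstacle and the main scrutiny. The separation there is not a classical complexity-theoretic lower bound: it rests on the claim that any sub-exponential syntax-procedure $M'$ deciding $L'$ would force the computist's inputs into a predetermined ordering and thereby contradict the free-will condition c2 of Definition~\ref{CE}. The delicate point I would have to defend is that this free-will contradiction legitimately counts, from the god's view, as a mathematical proof of an inequality between the two complexity classes, rather than merely a statement about what the resident computist can come to know. Everything in the present corollary is downstream of accepting that step, so in drafting the proof I would keep the verification of the three properties short and explicitly flag that the substantive content has already been discharged by Theorem~\ref{freew}.
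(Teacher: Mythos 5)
Your proposal matches the paper's own proof: it exhibits the same witness $E_e=(U_e,O_f)$, establishes physical plausibility from the structure of its two boxes (a Turing machine for $TBOX_e$ and the persistently evolutionary machine $PT_1$ for $SBOX_e$, via the argument of Subsection~\ref{ppl}), cites Theorem~\ref{prf} for enough power, and invokes Theorem~\ref{freew} for the separation. Your verification of conditions a2 and a3 and your closing remarks on where the real burden lies are slightly more explicit than the paper's brief proof, but the approach is essentially identical.
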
\begin{proof}
Persistently evolutionary Turing machines seem  as physically
plausible as Turing machines are. The structure of the transition
box of the universal processor  $U_e$ is a Turing machine and the
structure of its successful box is a Persistently evolutionary
Turing machine. Therefore $E_e=(U_e,O_f)$ is a physically
plausible computation environment. Also $E_e$ is enough powerful
by Theorem~\ref{prf}.
\end{proof}

\subsection{Complexity Equivalence}
In this part, we define an  equivalency between universal
processors from  time complexity view. We prove that the universal
processor $U_s$ of Turing computation environment $E_T$ is
polynomial time equivalent to the universal processor $U_e$ of the
persistently evolutionary computation environment $E_e$.

\begin{definition} Let $U_1=(TBOX_1,SBOX_1,INST_s,CONF_s)$, and

$U_2=(TBOX_2,SBOX_2,INST_s,CONF_s)$ be two universal processors
where their instructions and configurations are the same
instructions and configurations of the Turing computation
environment $E_T$. Also let $f:\mathbb{N}\rightarrow \mathbb{N}$
be a function. We say two universal processor $U_1$ and $U_2$ are
\emph{$f$-complexity equivalent} whenever:

\begin{itemize}
\item[1-] for each $C=(q,x\underline{a}y)\in CONF_s$, and each
$\iota\in INST_s$, if $TBOX_1(C,\iota)=C'$ for some $C'\in
CONF_s$, then there exist $\tau_1,\tau_2,...,\tau_k\in INST_s$,
for some $k\leq f(|xay|)$, such that for some $C_1,C_2,...,C_k\in
CONF_s$, $TBOX_2(C,\iota_1)=C_1$,   for all $2\leq i\leq k$,
$TBOX_2(C_{i-1},\iota_{i-1})=C_i$ and $C_k=C'$.

\item[2-] for each $C=(q,x\underline{a}y)\in CONF_s$, and each
$\iota\in INST_s$, if $TBOX_2(C,\iota)=C'$ for some $C'\in
CONF_s$, then there exist $\tau_1,\tau_2,...,\tau_k\in INST_s$,
 for some $k\leq f(|xay|)$, such that  for some
$C_1,C_2,...,C_k\in CONF_s$, $TBOX_1(C,\iota_1)=C_1$,   for all
$2\leq i\leq k$, $TBOX_1(C_{i-1},\iota_{i-1})=C_i$ and $C_k=C'$.

\item[3-] for each $C=(q,x\underline{a}y)\in CONF_s$, if
$SBOX_1(C)=YES$, then there exist $\tau_1,\tau_2,...,\tau_k\in
INST_s$, for some $k\leq f(|xay|)$ such that for some
$C_1,C_2,...,C_k\in CONF_s$, $TBOX_2(C,\iota_1)=C_1$,   for all
$2\leq i\leq k$, $TBOX_2(C_{i-1},\iota_{i-1})=C_i$ and
$SBOX_2(C_k)=YES$.

\item[4-] for each $C=(q,x\underline{a}y)\in CONF_s$, if
$SBOX_2(C)=YES$, then there exist $\tau_1,\tau_2,...,\tau_k\in
INST_s$,  for some $k\leq f(|xay|)$, such that for some
$C_1,C_2,...,C_k\in CONF_s$, $TBOX_1(C,\iota_1)=C_1$,   for all
$2\leq i\leq k$, $TBOX_1(C_{i-1},\iota_{i-1})=C_i$ and
$SBOX_1(C_k)=YES$.
\end{itemize}
\end{definition}

Suppose you, as a computist, have two personal computer on your
desk. The CPU of one of them is the universal processor $U_1$, and
the CPU of the other one is the universal processor $U_2$, the
programming language that you can communicate with the processor
are the same, and you know that

\begin{itemize}
\item[-] if you can transit from a configuration
$C=(q,x\underline{a}y)$ and an instruction $\iota$ to a
configuration $C'$ using the transition box $TBOX_1$ ($TBOX_2$)
only one time, then you can transit from the same configuration
$C$  to the same configuration $C'$ using the transition box
$TBOX_2$ ($TBOX_1$) at most $f(|xay|)$ times, and

\item[-] if you input    a configuration $C=(q,x\underline{a}y)$
to the $SBOX_1$ ($SBOX_2$),  and receives $YES$, then you can
transit from the configuration $C$ to a configuration $C'$ using
the transition box $TBOX_2$ ($TBOX_1$)  at most $f(|xay|)$ times,
such that the successful box  $SBOX_2$ ($SBOX_1$) outputs $YES$
for $C'$.
\end{itemize} Then, for you as a computist, it seems (wrongly) that if you
can do some work   using the processor $U_1$ in $m$ times, then
you may do the same work   using the   processor $U_2$ for at most
$2m\times f(m)$ times, and vice versa.

\begin{proposition}
Two universal processors $U_s$ and $U_e$ are $f$-complexity
equivalent where $f(n)=2n$ for all $n\in \mathbb{N}$.
\end{proposition}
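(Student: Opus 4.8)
The plan is to check the four defining clauses of $f$-complexity equivalence one at a time, exploiting the fact that $U_s$ and $U_e$ share the instruction set $INST_s$, the configuration set $CONF_s$, and—most importantly—the \emph{same} transition box, since $TBOX_e$ is defined to behave exactly like $TBOX_s$ (Definition~\ref{Ee}). The only genuine difference is the successful box, so clauses 1 and 2 should be immediate and the real content will sit in clauses 3 and 4. For clauses 1 and 2, because $TBOX_s$ and $TBOX_e$ are literally the same function, any single transition $TBOX_s(C,\iota)=C'$ is matched by the identical transition $TBOX_e(C,\iota)=C'$, and conversely. Taking $k=1$ works, and since every configuration $C=(q,x\underline{a}y)$ satisfies $|xay|\geq 1$, we have $k=1\leq 2|xay|=f(|xay|)$; so these clauses hold with no calculation.

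Next I would dispatch clause 4, which I expect to reduce to a set inclusion. The box $SBOX_e$ outputs $YES$ only on configurations of the shape $(h,\underline{\triangle}x)$ or $(h,x\underline{\triangle})$, the latter additionally demanding that $PT_1$ accept $x$; but $SBOX_s$ outputs $YES$ on \emph{every} configuration of either shape. Hence whenever $SBOX_e(C)=YES$ we already have $SBOX_s(C)=YES$, so $C$ itself is the required target reached by a path of length $0$, trivially within $f(|xay|)$. Thus the accepting set of $U_e$ is contained in that of $U_s$, and clause 4 demands nothing further.

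Clause 3 is where the actual argument lives, and I expect it to be the main obstacle. Here $SBOX_s(C)=YES$ forces $C=(h,\underline{\triangle}x)$ or $C=(h,x\underline{\triangle})$; the first subcase is already $SBOX_e$-accepting, so $C$ itself works, and the only troublesome subcase is $C=(h,x\underline{\triangle})$ with $x$ nonempty and $PT_1$ rejecting $x$, for then $SBOX_e(C)=NO$ and I must actually \emph{reach} an $SBOX_e$-accepting configuration by transitions. The key observation is that clause 3 imposes no restriction to any fixed syntax-procedure, so I am free to choose any instructions in $INST_s$; I therefore take the ``sweep-left'' instructions $[(h,b)\to(h,b,L)]$ for each $b\in\Gamma$, which rewrite a cell with its own symbol and move the head one cell leftward. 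Starting from $(h,x\underline{\triangle})$ with $|x|=n$, applying these $n+1$ times walks the head across all of $x$ and onto the blank to its left, yielding $(h,\underline{\triangle}x\triangle)$, which has the form $(h,\underline{\triangle}z)$ and is hence accepted by $SBOX_e$ unconditionally. The path length is $k=n+1=|x\triangle|\leq 2|x\triangle|=f(|x\triangle|)$, so the bound $f(n)=2n$ is met with room to spare.

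Finally I would assemble the four clauses into the conclusion. The single point demanding care is the boundary behaviour of $TBOX_e=TBOX_s$ at the left end during the sweep: a left-move while the head rests on the leftmost non-blank cell must materialise a blank to its left, so that the terminal configuration is genuinely of the form $(h,\underline{\triangle}z)$; this follows from the standard blank-padding convention built into the definition of $TBOX_s$. Conceptually the whole proposition rests on two facts: the accepting set of $SBOX_e$ is contained in that of $SBOX_s$ (giving clause 4 and half of clause 3 for free), and the one ``missing'' kind of accepting configuration, $(h,x\underline{\triangle})$, can always be transformed in $|x|+1$ steps into a left-end configuration that $SBOX_e$ accepts regardless of how it has evolved.
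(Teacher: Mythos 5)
Your proof is correct: the paper itself dismisses this proposition with ``It is straightforward,'' and your argument is exactly the verification it intends --- clauses 1 and 2 are immediate because $TBOX_e$ and $TBOX_s$ are the same function, clause 4 follows because every $SBOX_e$-accepting configuration (either $(h,\underline{\triangle}x)$, or $(h,x\underline{\triangle})$ with $PT_1$ accepting) is already $SBOX_s$-accepting, and your left-sweep of length $|x|+1\leq 2(|x|+1)$ settles the only nontrivial case of clause 3, reaching a configuration of the form $(h,\underline{\triangle}z)$ that $SBOX_e$ accepts unconditionally, independently of how $PT_1$ has evolved. The two conventions you flag are harmless given the paper's own usage (without blank-padding even the start configuration $(q_0,\underline{\triangle}x)$ would admit no transition), and if one objects to empty paths, each length-$0$ case becomes a length-$1$ case within the bound via a single move onto a fresh blank, e.g.\ $(h,\underline{\triangle}x)$ to $(h,\underline{\triangle}\triangle x)$.
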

\begin{proof} It is straightforward.
\end{proof}


\section{An Axiomatic System for Turing Computability}\label{AXIOM}
In this section, we provide an  axiomatic system for Turing
computation.  Let $CONF=CONF_s$, and  $INST=INST_s$ (where
$CONF_s, INST_s$ are defined in example~\ref{tce}. Also let $\Xi$
be the set of all finite subsets of $INST$ such that for every
$M\in \Xi$, for every two instructions
$\iota_1=[(q_1,a_1)\rightarrow (p_1,b_1,D_1)]$,
$\iota_2=[(q_2,a_2)\rightarrow (p_2,b_2,D_2)]$, if
$\iota_1,\iota_2\in M$, $q_1=q_2$, and $a_1=a_2$ then $p_1=p_2$,
$b_1=b_2$, and $D_1=D_2$.  Consider
\begin{itemize}
\item[] a function symbol $TB: CONF\times INST\rightarrow CONF
\cup\{\perp\}$, and

\item[] a predicate symbol $SB: CONF\rightarrow \{YES,NO\}$.
\end{itemize}
For each $x\in\Sigma^*$, define
$C_{0,x}=(q_0,\underline{\triangle} x)$.

\begin{definition} For each $x\in \Sigma^*$, for each $M\in \Xi$,
we define

\begin{itemize} \item[] $x\in L(M):\equiv$ there exists $n\in
\mathbb{N}$, there exists $\tau_1,\tau_2,...,\tau_n\in M$, there
exists $C_1,C_2,...,C_n\in CONF$, such that
$C_1=TB(C_{0,x},\tau_1)$, and $C_i=TB(C_{i-1}, \tau_i)$ for
$1<i\leq n$, and $SB(C_n)=YES$, and for all $\iota\in M$,
$TB(C_n,\iota)=\perp$.

\item[] $time_M(x)=n$ whenever there exists
$\tau_1,\tau_2,...,\tau_n\in M$, there exists $C_1,C_2,...,C_n\in
CONF$, such that $C_1=TB(C_{0,x},\tau_1)$, and $C_i=TB(C_{i-1},
\tau_i)$ for $1<i\leq n$, and $SB(C_n)=YES$, and for all $\iota\in
M$, $TB(C_n,\iota)=\perp$.

\end{itemize}

\end{definition}
An axiomatic system $\mathcal{T}$ for Turing computation consists
of following axioms A1-A4.

\begin{itemize}
\item[\textbf{A1.}] for all $x,y\in \Sigma^*$, for all
$a,b_1,b_2,c\in \Sigma$, for all $q,p\in Q$, \begin{center}
$TB((q,xb_1\underline{a}b_2y),[(q,a)\rightarrow
(p,c,R)])=(p,xb_1c\underline{b_2}y)$, and
\end{center}
 \begin{center}
$TB((q,xb_1\underline{a}b_2y),[(q,a)\rightarrow
(p,c,L)])=(p,x\underline{b_1}cb_2y)$.
\end{center}

\item[\textbf{A2.}] If $SB(C)=YES$ then for some $x\in \Sigma^*$,
either $C=(h,\underline{\triangle}x)$ or
$C=(h,x\underline{\triangle})$.

\item[\textbf{A3.}] For all $x\in \Sigma^*$, if
$C=(h,\underline{\triangle}x)$ then $SB(C)=YES$.

\item[\textbf{A4.}] For all $x\in \Sigma^*$, if
$C=(h,x\underline{\triangle})$ then $SB(C)=YES$.
\end{itemize}
Let $TM=\{T=(Q, \Sigma, \Gamma,q_0,\delta)\mid Q\subseteq Q_T\}$
be the set of all Turing machines (see definition~\ref{turingd}).
Define two functions $\mathcal{F}_1:TM\rightarrow \Xi$ and
$\mathcal{F}_2:\Xi\rightarrow TM$ as follows:

\begin{itemize}
\item for each Turing machine $T\in TM$,
$\mathcal{F}_1(T)=\delta$,

\item for each $M\in \Xi$, $\mathcal{F}_2(M)=(Q, \Sigma,
\Gamma,q_0,M)$, where $Q$ is the set of all states appeared in the
instructions of $M$.
\end{itemize}

\begin{theorem} For each $x\in \Sigma^*$, for every $T\in TM$, for
every $M\in \Xi$,
\begin{itemize}
\item $x\in L(T)$ iff $\mathcal{T}\vdash x\in
L(\mathcal{F}_1(T))$.

\item $time_T(x)=n$ iff $\mathcal{T}\vdash
time_{\mathcal{F}_1(T)}(x)=n$.

\item $\mathcal{T}\vdash x\in L(M)$ iff $x\in
L(\mathcal{F}_2(M))$.

\item $\mathcal{T}\vdash time_M(x)=n$ iff
$time_{\mathcal{F}_2(M)}(x)=n$.
\end{itemize}
\end{theorem}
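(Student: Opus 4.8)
The plan is to read this theorem as a joint soundness-and-completeness statement linking the formal system $\mathcal{T}$ to the standard Turing-machine semantics through the translations $\mathcal{F}_1$ and $\mathcal{F}_2$, and to prove each of the four ``iff''s by splitting it into a soundness half (``$\mathcal{T}\vdash\varphi$ implies $\varphi$ is true'') and a completeness half (``$\varphi$ is true implies $\mathcal{T}\vdash\varphi$''). The first and third bullets are the substantive ones; the two $time$ bullets follow by the identical argument while merely tracking the length $n$ of the computation path, so I would prove the membership statements in full and then remark that recording $n$ changes nothing except bookkeeping (up to the fixed convention relating the length of a path $C_{0,x}C_1\cdots C_n$ to the number of configurations it contains). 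I would also note at the outset that $\mathcal{F}_1$ and $\mathcal{F}_2$ are the evident mutually inverse identifications of a transition function $\delta$ with a syntax-procedure $M$, and that under the standard interpretation the defining formula of $x\in L(M)$ unwinds to exactly the existence of an accepting Turing computation of $\mathcal{F}_2(M)$ on $x$, which is already the content of Theorem~\ref{lang}.

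For the soundness halves I would exhibit the standard structure $\mathcal{M}_s$ that interprets the function symbol $TB$ as $TBOX_s$ and the predicate symbol $SB$ as $SBOX_s$ over the (freely generated) term algebra of configurations and instructions, and check directly that $\mathcal{M}_s\models$ A1--A4: axiom A1 holds because $TBOX_s$ was defined to return precisely those configurations on matching transitions; A2 holds because $SBOX_s(C)=YES$ only for the two halting forms $(h,\underline{\triangle}x)$ and $(h,x\underline{\triangle})$; and A3, A4 hold because $SBOX_s$ outputs $YES$ on both of those forms. By the soundness theorem for first-order logic every theorem of $\mathcal{T}$ then holds in $\mathcal{M}_s$, so $\mathcal{T}\vdash x\in L(M)$ yields $x\in L(\mathcal{F}_2(M))$, and symmetrically for the $\mathcal{F}_1$ direction.

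For the completeness halves I would argue by induction on the length of an actual accepting path $C_0C_1\cdots C_n$ of the Turing machine. Reading off, at each step $i$, the instruction $\tau_i$ that realizes the transition $C_{i-1}\to C_i$ (these instructions lie in $\delta=\mathcal{F}_1(T)$, respectively in $M$), axiom A1 proves the ground equation $C_i=TB(C_{i-1},\tau_i)$, and for the final successful configuration $C_n$, axiom A3 or A4 proves $SB(C_n)=YES$. Assembling these ground facts and applying $\exists$-introduction with the chosen witnesses $n,\tau_1,\dots,\tau_n,C_1,\dots,C_n$ yields a $\mathcal{T}$-derivation of the existential formula defining $x\in L(M)$.

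The hard part will be the one clause of that definition that the positive transition axioms do not obviously supply, namely the halting clause ``for all $\iota\in M$, $TB(C_n,\iota)=\perp$''. Since A1--A4 assert only positive values of $TB$ (and a positive/negative characterization of $SB$ via A2--A4), nothing in $\mathcal{T}$ directly proves an equation of the form $TB(C,\iota)=\perp$, so this clause is the genuine obstacle. I would discharge it by the observation that every successful $C_n$ carries the halting state $h$, whereas every instruction $[(q,a)\rightarrow(p,b,D)]$ occurring in $\mathcal{F}_1(T)=\delta$ has $q\neq h$ (because $\delta$ is a partial function on $(Q-\{h\})\times\Gamma$), so no instruction of $M$ is applicable at $C_n$; to convert this semantic fact into a derivation one reads A1 under the standard unique-names/Clark-completion convention on the freely generated configuration and instruction terms, under which $TB$ takes the default value $\perp$ off the matching patterns of A1. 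With that convention the finite conjunction $\bigwedge_{\iota\in M}TB(C_n,\iota)=\perp$ is derivable and completeness goes through; without it the provability directions would hold only semantically (as $\models$) rather than strictly syntactically (as $\vdash$). I would flag this as the single delicate point, everything else being routine path induction and $\exists$-introduction.
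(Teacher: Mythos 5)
The paper's own proof of this theorem is the single sentence ``It is straightforward,'' so there is no argument to compare yours against; what you have written is, in substance, the proof the paper omits. Your decomposition is the natural and correct one: soundness via the standard structure $\mathcal{M}_s$ (interpreting $TB$ as $TBOX_s$ and $SB$ as $SBOX_s$, then invoking first-order soundness), completeness by induction along an actual accepting path, using A1 for the ground transition equations, A3/A4 for the final successful configuration, and existential introduction; and your reduction of the two $time$ bullets to bookkeeping matches the paper's intent (cf.\ Theorem~\ref{lang}).

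More importantly, the ``delicate point'' you flag is a genuine defect that the word ``straightforward'' conceals. Axioms A1--A4 assert only positive values of $TB$, so one can interpret $TB$ on every pair not matching an A1 pattern as some fixed configuration rather than $\perp$ and still satisfy $\mathcal{T}$; consequently no ground formula $TB(C,\iota)=\perp$ is derivable, and the halting clause ``for all $\iota\in M$, $TB(C_n,\iota)=\perp$'' in the paper's definition of $x\in L(M)$ is not provable. Hence the completeness halves (left-to-right in the first two bullets, right-to-left in the last two) are false for $\vdash$ as literally written, and your repair --- a Clark-completion/unique-names convention making $\perp$ the default value of $TB$, in the spirit of the biconditional axiom $\mathbf{A'2}$ that the paper itself contemplates for $SB$ in a later footnote --- is exactly the missing ingredient; the only alternative is to weaken $\vdash$ to truth in the standard model. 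One residual wrinkle: your discharge of the halting clause uses that every instruction of $\mathcal{F}_1(T)=\delta$ has source state $q\neq h$, which is fine for the first two bullets, but in the last two bullets $M\in\Xi$ ranges over arbitrary deterministic instruction sets, and the paper's $\Xi$ (unlike Definition~\ref{turingd}) permits instructions with source state $h$, for which the stuck clause --- and with it the stated equivalence --- can fail outright. That, too, is a gap chargeable to the paper's statement rather than to your proof.
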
\begin{proof} It is straightforward.
\end{proof}

The above theorem declares that the axiomatic system $\mathcal{T}$
is an appropriate axiomatization of Turing computation.

\begin{theorem}
Let $\mathcal{T}'$ be the axiomatic system $\mathcal{T}-\{A_4\}$.
\begin{center}
$\mathcal{T}'\not\vdash \mathrm{P=NP}$.
\end{center}
\end{theorem}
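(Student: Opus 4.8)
The plan is to prove unprovability by exhibiting a \textbf{model}: I will show that the persistently evolutionary computation environment $E_e$ of Definition~\ref{Ee} furnishes an interpretation of the non-logical symbols $TB$ and $SB$ that validates $\mathcal{T}'=\{A1,A2,A3\}$ yet in which $\mathrm{P}=\mathrm{NP}$ fails. By the soundness of first-order logic, a theory cannot derive a sentence that is false in one of its models, so $\mathcal{T}'\not\vdash\mathrm{P=NP}$ follows at once.

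First I would fix the interpretation: read the function symbol $TB$ as $TBOX_e$ and the predicate $SB$ as $SBOX_e$. The verification that the three retained axioms hold is then a direct unwinding of Definition~\ref{Ee}. Axiom A1 holds because $TBOX_e$ is defined to coincide with $TBOX_s$, whose action on $(q,xb_1\underline{a}b_2y)$ is exactly the right-hand side prescribed by A1. Axiom A2 holds because $SBOX_e$ answers $YES$ only on configurations of the form $(h,\underline{\triangle}x)$ or $(h,x\underline{\triangle})$. Axiom A3 holds because $SBOX_e(h,\underline{\triangle}x)=YES$ by the first clause of its definition. Crucially, A4 is the one axiom that fails: on configurations $(h,x\underline{\triangle})$ the box $SBOX_e$ defers to the persistently evolutionary machine $PT_1$, which may answer $NO$. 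This is precisely why A4, and none of A1--A3, is the axiom whose removal is needed.

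The heart of the argument is that $\mathrm{P}=\mathrm{NP}$ is false in this interpretation. Here I would observe that the axiomatic definitions of $x\in L(M)$ and $time_M(x)$ are written purely in terms of $TB$ and $SB$, so under the chosen interpretation they collapse to the environment notions and the interpreted classes $\mathrm{P}$ and $\mathrm{NP}$ become exactly $\mathrm{P}_{E_e}$ and $\mathrm{NP}_{E_e}$. Theorem~\ref{freew} then gives $\mathrm{NP}_{E_e}\neq\mathrm{P}_{E_e}$, so the sentence $\mathrm{P=NP}$ is refuted in the model, which is all the soundness argument requires.

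The main obstacle I anticipate is that $E_e$ is an \emph{interactive} model rather than a static first-order structure: $SBOX_e$ evolves as the computist exercises free will, so it is not literally a single fixed relation. The reconciliation rests on Fact~\ref{fact}: at every moment of the interaction $SBOX_e$ is a genuine, well-defined function, and any derivation of $\mathrm{P=NP}$ from $\{A1,A2,A3\}$ would use only those three axioms and would therefore have to remain valid throughout every stage of the interaction. Theorem~\ref{subex} shows exactly that no such stable derivation can exist, since the free will of the computist forces the separating language $L'$ to stay outside $\mathrm{P}_{E_e}$ against every candidate syntax-procedure $M'$. I would therefore phrase the final step as: a $\mathcal{T}'$-proof of $\mathrm{P=NP}$ would commit the future behaviour of the computist in a manner that Theorem~\ref{subex} proves incompatible with free will, so no such proof exists.
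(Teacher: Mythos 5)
Your proposal is correct and follows essentially the same route as the paper's own proof: interpret $TB$ as $TBOX_e$ and $SB$ as $SBOX_e$, check that $A1$--$A3$ hold (with $A4$ failing), and invoke Theorem~\ref{freew} to refute $\mathrm{P=NP}$ in this model, so soundness yields $\mathcal{T}'\not\vdash\mathrm{P=NP}$. Your additional discussion of why the evolving, interactive $SBOX_e$ can still serve as an interpretation is a point the paper leaves implicit, but it does not change the argument's structure.
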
\begin{proof} We need to provide a model for the theory
$\mathcal{T}'$ such that $\mathrm{P}$ is not equal to
$\mathrm{NP}$ in the model. Interpret the predicate symbol $SB$ to
be the successful box $SBOX_e$ of computation environment $E_e$,
and the function symbol $TB$ to be the transition box $TBOX_e$.
All axioms in $\mathcal{T}'$ are satisfied by this interpretation.
On the other hand, by Theorem~\ref{freew}, in this model, we have
$\mathrm{P\neq NP}$.~\footnote{Consider axiom
\begin{itemize}\item[$\mathbf{A'2.}$]   $SB(C)=YES$ if and only if  for some
$x\in \Sigma^*$ $C=(h,\underline{\triangle}x)$.
\end{itemize} One may check that $\Omega=\{\mathbf{A1},\mathbf{A'2}\}$ is a theory for
Turing computability, and for every $M\in \Xi$, for every $x\in
\Sigma^*$, $\Gamma'\vdash x\in L(M)$ iff $\Omega\vdash x\in L(M)$,
and for every $M\in \Xi$, for every $x\in \Sigma^*$, for every
$n\in \mathbb{N}$, $\Gamma'\vdash time_M(x)=n$ iff $\Omega\vdash
time_M(x)=n$.}
\end{proof}

\section{$\mathrm{P}$ vs $\mathrm{NP}$ for the Computist}

 We claim that the computist who lives inside the Turing
computation environment $E_T$ can  never be confident  that
whether he lives inside a static environment or a persistently
evolutionary one. It is simply because of the three following
facts:
\begin{itemize}
\item[1-] The language (syntax) of two computation environments
$E_T$ and $E_e$ are the same. That is, $INST_e=INST_s$,
$CONF_e=CONF_s$, and $\Xi_e=\Xi_s$.

\item[2-] The universal processor of the Turing computation
environment is a black box for the computist. The computist can
never know that whether the universal processor of its environment
persistently evolves or not. It is because an observer who doe not
have access to the inner structure of a black box, can never
distinguish that whether the black box persistently evolves or
not.

\item[3-] If the universal processor of an environment
persistently evolves, then it is possible that there are
syntax-procedures in $\Xi$ which their languages are not
predetermined and $\mathrm{P=NP}$ conflicts with the free will of
the computist. For example, $\mathrm{P}$ is not equal to
$\mathrm{NP}$ in the computation environment $E_e$.
\end{itemize}

In follows, we show that for the computist  who lives in one of
the environments $E_T$ or $E_e$, it is not possible to get aware
that actually he lives in a static   environment or a persistently
evolutionary one.

\begin{definition}\textbf{ (\emph{BLACK BOX})}. Let $X$ and $Y$ be two
sets,

\begin{itemize}
\item  an \emph{input-output} black box $\mathbf{B}$, for an
observer $O$, is a box that \begin{itemize} \item  The observer
$O$ does not see the inner instruction of the box, and

\item the observer $O$ chooses elements in $X$, and input them to
the box, and receives elements in $Y$ as output.
\end{itemize}\item   We say an input-output black box $\mathbf{B}$ behaves
\underline{well-defined} for an observer $O$, whenever if the
observer $O$ inputs $x_0$ to the black box, and the black box
outputs $y_0$ at a stage of time, then whenever in future if the
observer $O$ inputs the same $x_0$ again, the black box outputs
the same $y_0$.

\item We say a well-defined black box is static (or is not
order-sensitive) whenever for all $n\in\mathbb{ N}$, for every
$x_1,x_2,...,x_n\in X$, for every permutation $\sigma$ on
$\{1,2,...,n\}$, if the  observer   inputs $x_1,x_2,...,x_n$
respectively to $B$ once, and receives
$y_1=\mathbf{B}x_1,y_2=\mathbf{B}x_2,...,y_n=\mathbf{B}x_n$, and
then the god (the one who has access to the inner structure of the
black box) resets the inner structure of the black box
$\mathbf{B}$, and after reset, the observer inputs
$x_{\sigma(1)},x_{\sigma(2)},...,x_{\sigma(n)}$ respectively to
$\mathbf{B}$, then the outputs of $\mathbf{B}$ for each $x_i$
would be the same already output $y_i$.

\item An observer who does not have access to the inner structure
of a  well-defined black box $\mathbf{B}$, and cannot reset the
inner structure  black box $\mathbf{B}$, can never get aware
whether the black box is static or order-sensitive.
\end{itemize}
\end{definition}
\begin{example} Let $X=Y=\mathbb{N}$, and $\mathbf{B}$ be the black box that works as follows.
 I hide in the black box, and do the following strategy. I
plan to output $1$ for each $n$ before the observer gives the
black box $5$ or $13$ as inputs. If the observer  gives $5$ as
input (and has not yet given $13$ before) then   after that time,
I output $2$ for all future inputs that have not been already
given to the black box. For those natural numbers that  the
observer have already given them as input, I still output the same
$1$.
 If the observer  gives $13$ as input (and has not yet given
$5$ before) then   after that time, I output $3$ for all future
inputs that have not been already given to the black box. For
those natural numbers that the observer has already given them as
input, I still output the same $1$. It is easy to verify that

\begin{itemize}
\item[1-] The black box $\mathbf{B}$ works well-defined.

\item[2-] The black box is order-sensitive.

\item[3-] For the observer $O$, it is not possible to be aware
that the black box is order sensitive. He always could assume that
actually  a machine Turing is in the black box.

\item[4-] The function that the observer computes via the black
box $B$ is not a predetermined function, but the observer can
never find out that whether the function is predetermined or not!
\end{itemize}

\end{example}

\begin{proposition}~
\begin{itemize}
\item For every finite sets of pairs $S=\{(i_k,o_k)\mid 1\leq
k\leq n, n\in \mathbb{N}, i_k,o_k\in \Sigma^*\}$, there exists a
Turing machine $T$ such that for all $(i_k,o_k)\in S$, if we give
$i_k$ as an input to $T$, the Turing machine $T$ outputs $o_k$.

\item For every finite sets of pairs $S=\{(i_k,o_k)\mid 1\leq
k\leq n, n\in \mathbb{N}, i_k,o_k\in \Sigma^*\}$, there exists a
Persistent evolutionary Turing machine $N$ such that for all
$(i_k,o_k)\in S$, if we give $i_k$ as an input to $N$, the
Persistent evolutionary  machine $N$ outputs $o_k$.

\end{itemize}
\end{proposition}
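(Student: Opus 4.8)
The plan is to prove the two items separately, with the second reducing immediately to the first. Throughout I would first record that the claim can only hold when $S$ is \emph{functional}, i.e.\ whenever $(i_k,o_k),(i_j,o_j)\in S$ with $i_k=i_j$ one has $o_k=o_j$; otherwise no deterministic machine could output two distinct strings on the same input. So I would tacitly assume that $S$ codes a finite partial function $i_k\mapsto o_k$ (surely the intended reading), and I would state this explicitly at the outset.

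For the first item, I would construct a single ``lookup-table'' Turing machine $T$ whose transition function hard-codes the finitely many pairs. Since $n$ is finite and each $i_k,o_k\in\Sigma^*$ is a finite string, I can build $T$ so that on any input it (i) scans the input tape and, using a bounded tree of states, tests the input against each stored key $i_1,\dots,i_n$ in turn; (ii) upon the first match with some $i_k$, erases the input, writes $o_k$ on the tape, and enters the halting state $h$ in a successful configuration; and (iii) on no match simply halts (its behaviour off $\{i_1,\dots,i_n\}$ is irrelevant to the claim). Each phase is realized by finitely many states and transitions, so $\delta$ is a genuine finite transition function and $T=(Q,\Sigma,\Gamma,q_0,\delta)$ is a legitimate Turing machine in the sense of Definition~\ref{turingd}. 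By construction $T$ outputs $o_k$ on input $i_k$ for every $(i_k,o_k)\in S$.

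For the second item I would invoke the earlier observation that every Turing machine is a persistently evolutionary Turing machine: taking $T$ as in the first item and letting $I$ be the identity function of the identity example (so the underlying code never changes), the pair $N=(\lfloor T\rfloor,I)$ is a persistently evolutionary Turing machine whose input-output behaviour coincides with that of $T$. Hence $N$ outputs $o_k$ on $i_k$ for each $(i_k,o_k)\in S$, which proves the second item.

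I expect no serious obstacle here: the content is entirely the finitary lookup-table construction, and the only points demanding care are the functionality assumption on $S$ (without which the claim is false) and writing out the matching automaton cleanly. Both are routine, so the ``difficulty'' is bookkeeping rather than any genuine mathematical hurdle.
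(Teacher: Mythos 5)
Your proposal is correct and is precisely the routine argument the paper has in mind: the paper's own proof is just ``It is straightforward,'' and the intended content is exactly your lookup-table machine for the first item together with the paper's earlier example (a Turing machine code paired with the identity evolution function) showing every Turing machine is a persistently evolutionary one for the second. Your explicit remark that $S$ must be functional (no repeated input with distinct outputs) is a caveat the paper silently omits, and it is a genuine precondition for the statement as written to hold for deterministic machines.
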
\begin{proof}
It is straightforward.
\end{proof}

\begin{corollary}
Let $B$ be a an input-output black box for an observer. At each
stage of time, the observer has observed only a finite set of
input-output pairs. By the pervious proposition, at each stage of
time, the observer knows both the following cases to be possible:
\begin{itemize}
\item[1-] There exists a Turing machine inside the black box $B$.

\item[2-] There exists a Persistent evolutionary Turing machine
inside the black Box $B$.
\end{itemize}
\end{corollary}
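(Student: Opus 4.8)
The plan is to derive this corollary directly from the previous proposition, which supplies, for any finite collection of input--output pairs, both a Turing machine and a persistently evolutionary Turing machine realizing those pairs. First I would fix an arbitrary stage of time and record the observer's entire interaction history as a finite set $S=\{(i_k,o_k)\mid 1\le k\le n\}$, where each $i_k$ is an input the observer has fed to $B$ and $o_k$ is the output received. Because the observer has no access to the inner structure of $B$ and cannot reset it, this finite set $S$ is the \emph{only} information available to him at this stage; the whole argument must therefore hinge on what finite data can and cannot distinguish.

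Next I would invoke the two clauses of the previous proposition in turn. The first clause yields a Turing machine $T$ with $T(i_k)=o_k$ for every $(i_k,o_k)\in S$; hence the hypothesis that a static Turing machine sits inside $B$ is consistent with everything observed so far, establishing case~1. The second clause yields a persistently evolutionary Turing machine $N$ agreeing with $S$ on the same pairs; hence the hypothesis that a persistently evolutionary machine sits inside $B$ is equally consistent with the record, establishing case~2. Since neither hypothesis can be refuted from $S$ alone, the observer must regard both as possible, which is exactly the claim.

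The only point that needs care --- and the place I would expect a careful reader to probe --- is the order-sensitivity of the persistently evolutionary machine. Unlike the Turing machine, whose behaviour is a fixed function, the machine $N$ may evolve as inputs arrive, so a priori the observed outputs could depend on the order in which the $i_k$ were presented. However, this causes no difficulty: by well-definedness (Fact~\ref{fact}) together with the persistent condition, once $N$ has returned $o_k$ on $i_k$ it returns the same value on every later repetition of $i_k$, so the recorded pairs form a consistent partial input--output behaviour that the proposition's construction can realize. I would make explicit that $S$ is precisely the observer's \emph{actual} realized history, so we only need one machine $N$ matching that particular sequence, not a machine matching every possible reordering of it. With this remark the reduction to the previous proposition is complete, and the corollary follows without any further computation.
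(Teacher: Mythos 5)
Your proposal is correct and follows essentially the same route as the paper: the paper's corollary carries its justification inline (``by the previous proposition''), namely that the observer's finite history of input--output pairs is realized both by some Turing machine and by some persistently evolutionary Turing machine, so neither hypothesis can be ruled out. Your additional remark on order-sensitivity --- that one only needs a machine matching the \emph{actual} realized interaction sequence, not every reordering of it --- is a sensible clarification of the same argument rather than a different approach.
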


\begin{fact}
In any computation environment $E=(U,O)$, since the computist $O$
cannot reset the universal processor and goes back to past, he can
\underline{never} realize that actually he lives in a static
environment or in  a persistently evolutionary one.
\end{fact}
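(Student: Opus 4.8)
The plan is to reduce the statement to the black-box indistinguishability already developed above, rather than to argue about $E_T$ and $E_e$ directly. By condition $c1$ of Definition~\ref{CE}, the computist $O$ interacts with the universal processor $U$ solely through the input-output behavior of its two boxes $TBOX$ and $SBOX$, and never inspects their inner structure. Hence, from the computist's standpoint, $U$ is exactly an input-output black box $\mathbf{B}$ in the sense of the \textbf{BLACK BOX} definition, where the inputs are the pairs $(C,\iota)$ fed to $TBOX$ (respectively the configurations $C$ fed to $SBOX$) and the outputs are the returned values. The first step is to observe that, under this identification, ``$U$ is static'' corresponds precisely to ``$\mathbf{B}$ is static (order-insensitive)'' and ``$U$ persistently evolves'' corresponds to ``$\mathbf{B}$ is order-sensitive'': a static processor returns outputs that do not depend on the order of earlier queries (its boxes do not change), whereas an evolving processor's outputs may depend on that order.

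Next I would invoke the preceding Corollary together with condition $c8$. At every stage of time the computist has issued only finitely many queries and recorded only a finite set of input-output pairs, and by the Proposition on realizing finite sets of pairs this finite record is consistent both with a (static) Turing machine sitting inside $\mathbf{B}$ and with a persistently evolutionary Turing machine sitting inside it. Thus no finite history the computist can accumulate singles out one alternative over the other. The decisive ingredient is that the computist cannot reset $U$ and go back to the past: the only experiment that could expose order-sensitivity would be to run a sequence of queries, reset, rerun a permuted sequence, and compare the outputs; but well-definedness ($c8$) forces any \emph{repeated} query to return its earlier value, so without a reset the computist can never probe the counterfactual ``what would have been returned had I queried in a different order.'' Since persistent evolution always respects the past and affects only as-yet-unqueried inputs, no observed behavior can ever betray the evolution. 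Combining the permanent consistency of both hypotheses with the history and the impossibility of a resetting experiment yields the claim.

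The step I expect to be the main obstacle is making the correspondence between ``static versus persistently evolutionary universal processor'' and ``static versus order-sensitive black box'' fully precise, and then arguing rigorously that the no-reset restriction is exactly what blocks detection. Concretely, one must show that along the computist's actual (non-resettable) query stream, every order-sensitive behavior realizable by an evolving $U$ can be matched by some static $U$ and conversely, so that the two hypotheses are observationally equivalent on \emph{every} possible run, not merely on a fixed one. This is where the finite-realizability Proposition and the ban on resets have to be combined with care; the routine direction (exhibiting a consistent static machine for a given finite history) is immediate, but establishing indistinguishability uniformly over all admissible interaction orders is the delicate part.
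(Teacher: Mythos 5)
Your proposal is correct and follows essentially the same route as the paper: the paper justifies this fact by exactly the chain you assemble, namely condition $c1$ identifying $U$ as an input-output black box, the preceding Proposition and Corollary showing that any finite observation history is realizable both by a static Turing machine and by a persistently evolutionary one, and the no-reset clause of the \textbf{BLACK BOX} definition blocking the only experiment (rerunning a permuted query sequence) that could expose order-sensitivity. The uniformity concern you flag at the end is real but is not addressed by the paper either, which rests content with the finite-history consistency argument.
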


Therefore, the computist who lives in the Turing computation
environment can never have evidence from his observation that the
universal processor which means to syntax-procedures in $\Xi_s$ is
static or  persistently evolutionary.

The following argument can be carried out in Epistemic
logic~\cite{kn:dit3}.

\begin{itemize}
\item There are two possible worlds (two computation environments)
$E_T$ and $E_e$ that the computist can never distinguish between
them.

\item In Epistemic logic an agent \emph{knows}  a formula
$\varphi$ whenever the formula $\varphi$ is true in all possible
worlds which are not indistinguishable from the actual world for
the agent.

\item In the possible world $E_e$, $\mathrm{P\neq NP}$.

\end{itemize} thus,
\begin{center}
The computist who lives inside the Turing computation environment
$E_T$ does not know $\mathrm{P=NP}$.~\footnote{A formal version of
this argument is presented in the next paper ``Computation
Environment (2)".}
\end{center}

\section{Concluding Remarks}

$\mathrm{P}$~versus~$\mathrm{NP}$ is a major
 difficult problem in computational complexity theory, and it is shown that
lots of certain techniques  of complexity theory  are failed to
answer to this problem (see~\cite{kn:pnp2,kn:pnp1, kn:pnp3}).  We
constructed a computational environments $E_e$ which

\begin{itemize}
\item[1-] is physically plausible,

\item[2-] is enough powerful, and the computist who lives inside
the proposed environment can carry out  logical deductions,

\item[3-] similar to the human being who ignores his brain in
formalizing  notions of computation and time complexity, the
computist of the proposed environment  does not have access to the
inner structure and internal actions of the universal processor,

\item[4-] the universal processor of the computation persistently
evolve in a \emph{feasible} time, and

\item[5-] $\mathrm{P}$ is not equal to $\mathrm{NP}$ in the
proposed environment.

\end{itemize}
It seems that the above five items make it possible to know the
proposed computation environment as an alternative for the
computation environment which surrounded the human being.

This paper is a first manuscript of some serial articles which
would be appeared sequentially:
\begin{itemize}
\item[1-] \emph{A Persistent Evolutionary Semantics for Predicate
Logic.}  In (classic) model theory, the mathematician who studies
a structure is assumed as a god who lives out of the structure,
and the study of the mathematician does not effect the structure.
Against of this, we will propose a new semantics, called
persistently evolutionary semantics, for predicate logic that the
 meaning of functions and predicates symbols are not already predetermined,
 and  predicate and function symbols find their meaning through the
 interaction of the subject with the language. Then, we will prove formally that
the computist who lives inside  the Turing computation environment
can never \emph{know} $P=NP$.

\item[2-] We will discuss the human being as a computist of its
computation environment. Lots of people do not agree with us that
one is allowed to distinguish between syntax and semantics of
Turing machines.  Although, we proposed an interactive semantics
for computation and prove that $\mathrm{P}$ is not equal to
$\mathrm{NP}$ in this semantics, but one may note that the
computist who lives inside the computation environment $E_e$ does
not distinguish between the syntax and semantics of procedures in
$\Xi_e$ (since the computist ignores the inner structure of the
universal processor, and cannot be aware that whether he lives in
a static environment or a persistently evolutionary one).
Therefore, we will discuss that

\begin{itemize} \item it is possible that  the human being  lives in an interactive
environment (like the environment $E_e$),  \item the human being
does not distinguish between the syntax and semantics of Turing
machines (similar to the computist who lives inside $E_e$), and

\item  the human being, even by considering that the semantics and
syntax to be the same, cannot prove $\mathrm{P=NP}$.

\end{itemize}

\item[3-] There are lots of open problems in sciences. One of the
main reason that why $\mathrm{P~vs~NP}$ is so famous, is because
of the foundation of cryptography. We proved that $\mathrm{P\neq
NP}$ for interactive computation. Interactive computation is
physically plausible and could be considered as a new paradigm in
computer science. We will propose an interactive cryptosystem and
to show that it is safe, we reduce our interactive $\mathrm{P\neq
NP}$ statement to it.
\end{itemize}

 \vspace{.2in}

\noindent  \textbf{Acknowledgement}. I  would like to thank Amir
Daneshgar for his several careful reading of the manuscript, and
his very helpful comments,    advices, suggestions, and our
several useful discussions. I also would like to thank Mohammad
Saleh Zarepour for his comments, and Sharam Mohsenipour for his
objections. I would like to thank Farzad Didevar for his comments
and objections.  The idea of persistently evolutionary Turing
machines refers  back to the notion of dynamic computation in the
author's Phd thesis. The author would like to thank the supervisor
of his PhD thesis, Mohammad Ardeshir, for all of his warm
kindness, supports, and counsels.

\section{Appendix~A.}
In the appendix,  a very short review of hypercomputations and an
introduction of Brouwer's intuitionism is presented. To review the
Hypercomputation resources, the section~4~of~\cite{kn:ord} is
used, and the   review of Brouwer's choice sequences is derived
from~\cite{kn:ob}.

\subsection{Hypercomputation}\label{HYP}
In what follows, we briefly review a collection of
hypercomputational resources and those hypermachines that  use
these resources.

\vspace{0.5cm}

\noindent  \textbf{Non-recursive Information Sources}.

 The paradigm
hypercomputation starts with the \emph{o-machine}, proposed by
Turing in 1939~\cite{kn:om}. O-machine is a Turing machine
equipped with an oracle that is capable of answering questions
about the membership of a specific set of natural numbers.  If the
oracle set is recursive then the o-machine   gains no new power,
but if the oracle set is not itself computable by Turing machines,
the o-machine may compute an infinite number of non-recursive
functions. The o-machines use   non-recursive information
sources~\cite{kn:ord} as hypercomputational
resources~\cite{kn:ord}.

\emph{Coupled Turing machine}, introduced by Copeland and
Sylvan~\cite{kn:jcrs}, is a Turing machine with one or more input
channels, providing input to the machine while the computation is
in progress. The specific sequence of input determines the
functions that  the  coupled Turing machine can perform. It
exceeds a Turing machine if the sequence of input is
non-recursive. Like o-machines, coupled Turing machines use
\emph{non-recursive information sources}. Besides the above
discussed hypermachines,
  \emph{Asynchronous network of Turing machines}~\cite{kn:jcrs},
  \emph{Error prone Turing machines}~\cite{kn:ord},
and \emph{Probabilistic Turing machines}~\cite{kn:ptm} are
well-known hypermachines  that all use non-recursive information
sources as their hypercomputation resources.

\vspace{0.5cm}

 \noindent \textbf{Infinite Memory}.

  Another way to expand the capabilities of a
Turing machine is to allow it to begin with infinite number of
symbols initially inscribed on its tape. \emph{Turing machines
with initial inscription}   which have an explicitly infinite
amount of storage space are not physically plausible.

\vspace{0.5cm}

\noindent  \textbf{Infinite Specification.}

 An \emph{infinite state Turing
machine} is a Turing machine where the set of states is allowed to
be infinite. This type of machine has  an infinite amount of
transitions, with only a finite number of transition    from a
given state. This gives the Turing machine an infinite program of
which only a finite (but unbounded) amount of transitions  is used
in any given computation. The infinite state Turning machines
require \emph{infinite specification} which does not seems
physically plausible.

\vspace{0.5cm}

\noindent \textbf{Infinite Computation}.

 In the last century, Bertrand
Russell~\cite{kn:Russell}, Ralph Blake~\cite{kn:blake}, and
Hermann Weyl~\cite{kn:weyl} independently proposed the idea of a
process that performs its step in one unit of time and each
subsequent step in half the time of the step before~\cite{kn:ord}.
Therefore, such a process could complete an infinity of steps in
two time unit. The application of this temporal patterning to
Turing machines has been discussed briefly by Ian
Stewart~\cite{kn:is} and in much more depth by
Copeland~\cite{kn:bjc} under the name of accelerated Turing
machines. The hypercomputation resource is \emph{infinite
computation}. To achieve infinite computation through
acceleration, we rapidly run into conflict with physics.  For the
 tape head to get faster and faster, its speed converges to
 infinite.

Joel Hamkins and Andy Lewis presented another kind of
hypermachines that use infinite computation~\cite{kn:ittm}, named
\emph{infinite time Turing machines}. The infinite time Turing
machine is a natural extension of Turing machine to transfinite
ordinal times, the machine would be able to operate for
transfinite numbers of steps. An interesting case about
\emph{infinite time Turing machines} is that  it has   been proved
$P\neq NP$ for this model of computation~\cite{kn:ITM}.

\vspace{0.5cm}

\noindent \textbf{ Interaction}.

 Among other resources: non-recursive information
 sources, infinite memory, infinite specification, and infinite
 computation, the interaction seems to be    possible with our current
 physics.  Persistently evolutionary Turing machines which are
 introduced in this paper, use the interaction as a
 Hypercomputation resource.

A kind of hypermachines that use the  interaction as resource is
the class of persistent Turing machines
 (PTMs), \emph{multiple
machines with a persistent worktape preserved between
interactions}, independently introduced by Goldin and
Wegner~\cite{kn:weg} and Kosub~\cite{kn:kos}. \emph{Consistent}
PTMs, a subclass of PTMs, produce the same output string for a
given input string everywhere within a single interaction stream;
different interaction streams may have different outputs for the
same input. PTMs are a minimal extension of Turing machines that
express interactive behavior. The behavior of a PTM is
characterized by input-output streams instead of input-output
strings. \emph{Interaction streams} have the form
$(i_1,o_1),(i_2,o_2),...,$ where $i$'s are input strings and $o$'s
are output strings by PTM. For all $k$, $o_k$ is computed from
$i_k$ but preceded and can influence $i_{k+1}$. The set of all
interaction streams for a PTM $M$ consists its language, $L(M)$,
and two PTMs $M1$ and $M2$ are equivalent  if and only if
$L(M1)=L(M2)$.  Actually, PTMs extend computing to computable
nonfunctions over histories rather than noncomputable functions
over strings, whereas persistently evolutionary Turing machines
(introduced in this paper) extends computing to interactive
computable functions which are not \emph{predetermined. }

\subsection{Intuitionism}\label{intu}

Over the last hundred years, certain mathematicians have tried to
rebuilt mathematics on constructivist principles~\cite{kn:TD}.
However, there are considerable differences   between various
representatives of constructivist, and there exists no explicit
unique answer to what a \emph{constructive object}  or a
\emph{constructive method} is.  In contrast to other
constructivist schools as Bishop's and Markov's~\cite{kn:db}, for
Brouwer, the notion of ``constructive object'' is not restricted
to have a numerical meaning, or to   be presented as `words' in
some finite alphabet of symbols~\cite{kn:TD}. According to
Brouwer, mathematical objects are mental constructions, a {\em
languageless} activity and independent of logic. Brouwer
recognized   the {\em choice sequences} as legitimate mathematical
objects~\cite{kn:TD}.

 Imagine you have a
collection of objects at your disposal, let's say the natural
numbers. Pick out one of them, and note the result. Put it back
into the collection, and choose again. Since you have the
\emph{ability to choose freely}, you may choose a different one,
or the same again. Record the result, and put it back. You may
make further choices and keeping on. A \emph{choice sequence} is
what you get if you think of the sequence you are making as
potentially infinite~\cite{kn:ob}. Initial segments are always
finite. We cannot make an actually infinite number of choices, but
we can always extend an initial segment by making a further
choice. The following characteristic of the choice sequences is
crucial in our consideration:
\begin{quote}
{\footnotesize The subject successively chooses objects,
restriction on future choices, restriction on restriction of
future choices, etc.~(\cite{kn:vv}~page 6)}
\end{quote}
The object of classical mathematics have their properties
independently from us and are \emph{static}. Choice sequences, in
contrast, depends on the subject (who has to make the choice), and
they change through time. They are individual \emph{dynamic}
objects that come into being, at the moment that the subject
decides to intend them, and with each further choice, they grow,
and they are not necessarily \emph{predetermined} by some law.
\begin{quote}
{\footnotesize{\bf static-dynamic} An object is static exactly if
at no moments parts are added to it, or removed from it. It is
dynamic exactly if at some moment parts are added to it, or
removed from it.~(\cite{kn:va},~page 12)}
\end{quote}

Since choice sequences are dynamic objects and are accepted as
 intuitionistic mathematical objects, the notion of
``constructive method'' cannot be captured by Turing
computability. In addition, in intuitionism, the notion of
decidability differs from the notion of recursiveness. Although
any recursive set is decidable, the converse is not true. From
intuitionistic view, a subset $A$ of $\mathbb{N}$, is
\emph{decidable} if and only if there exists a sequence $\alpha\in
2^{\mathbb{N}}$, such that, for every $n$, $n\in A$ if and only if
$\alpha(n)=1$~\cite{kn:deci}. It is not required that the
 sequence $\alpha$ is given by a finite algorithm, it can be a
 choice sequence.

\end{document}